\definecolor{NiColor}{RGB}{77,77,255}
\definecolor{NiColoRed}{RGB}{255,77,77}
\definecolor{NiCitation}{RGB}{77,255,77}
\newtheoremstyle{TheoremStyle}
        {3pt}
        {3pt}
        {}
        {}
        {\bf}
        {:}
        {.5em}
        {}
\newtheoremstyle{ExampleAndRemarkStyle}
        {3pt}
        {3pt}
        {}
        {}
        {\bf}
        {:}
        {.5em}
        {}
\newtheoremstyle{ProofStyle}
        {3pt}
        {3pt}
		{}
        {}
        {\bf}
        {:}
        {.5em}
        {}
\theoremstyle{TheoremStyle}
\newtheorem{theorem}{Theorem}
\newtheorem{corollary}[theorem]{Corollary}
\newtheorem{lemma}[theorem]{Lemma}
\newtheorem{Definition}[theorem]{Definition}
\theoremstyle{ExampleAndRemarkStyle}
\newtheorem{remark}[theorem]{Remark}
\theoremstyle{ProofStyle}
\title{Thermal state with quadratic interaction}
\date{}
\author{Nicolò Drago}
\begin{document}
\maketitle
\begin{flushleft}
\begin{small}
Dipartimento di Fisica, Università degli Studi di Pavia, Via Bassi, 6, I-27100 Pavia, Italy.\\
Istituto Nazionale di Fisica Nucleare -- Sezione di Pavia, Via Bassi, 6, I-27100 Pavia, Italy.\\
Istituto Nazionale d'Alta Matematica ``Francesco Severi'', Unità di Ricerca dell'Università di Pavia, Dipartimento di Matematica -- Via Adolfo Ferrata, 5, 27100 Pavia PV, Italy.\\
mail: nicolo.drago@unipv.it
\end{small}
\end{flushleft}

\begin{abstract}
We consider the perturbative construction, proposed in \cite{FrLi14}, for a thermal state $\Omega_{\beta,\lambda V\{f\}}$ for the theory of a real scalar Klein-Gordon field $\phi$ with interacting potential $V\{f\}$.
Here $f$ is a spacetime cut-off of the interaction $V$ and $\lambda$ is a perturbative parameter.
We assume that $V$ is quadratic in the field $\phi$ and we compute the adiabatic limit $f\to 1$ of the state $\Omega_{\beta,\lambda V\{f\}}$.
The limit is shown to exist, moreover, the perturbative series in $\lambda$ sums up to the thermal state for the corresponding (free) theory with potential $V$.
In addition, we exploit the same methods to address a similar computation for the non-equilibrium steady state (NESS) \cite{Ru00} recently constructed in \cite{DrFaPi17}.
\end{abstract}

\section{Introduction}
Algebraic quantum field theory (AQFT) is a mathematically rigorous approach to quantum field theory (QFT).
Nowadays, AQFT is a well-established set-up to describe the propagation of quantum fields on curved spacetimes \cite{BrDaFrYn15,Wa94}.

The approach can be summarized as follows.
To any physical system one associates a $*$-algebra $\mathcal{A}$, whose elements are interpreted as the observables of the system.
Algebraic relations reproduce natural assumptions on the structural properties of the observables, while the whole construction is subjected to the requirement of covariance \cite{BrFrVe03,Kay92}, which ensures that the $*$-algebra $\mathcal{A}$ is coherently constructed on any globally hyperbolic spacetime \cite{BeSa05,BeSa06}.
The dynamics can be implemented algebraically through the time-slice axiom \cite{ChFr09}.
Once the algebra $\mathcal{A}$ has been identified, the notion of state can be introduced \cite[Chap.5]{BrDaFrYn15}.
The latter is, per definition, a linear, positive and normalized functional $\Omega\colon\mathcal{A}\to\mathbb{C}$.
Yet, not all states are found to be physically relevant: a necessary constraint is the so-called Hadamard condition \cite{FeVe13,FuSwWa78,FuNaWa81,Wa94} which has been recast in the framework of Microlocal Analysis \cite{Ho83} in the seminal works \cite{Ra96,RaVe96}.

The algebraic approach has been successfully applied and it is well-understood for free theories \cite{BrDaFrYn15}.
Interacting theories can be addressed with the same techniques, however, the underlying non-linearity of the equations of motion creates additional difficulties.
To handle this problem, one usually switches to the perturbative approach, which can be described as follows.
Since the non-linear dynamics can be read as a correction $V$ of a linear dynamics, one can try to expand interacting observables as formal power series in a formal parameter $\lambda$ for the interacting potential $\lambda V$.
From a technical point of view, the above mentioned expansion of interacting observables is realized through the so-called quantum M\o ller operator $\mathsf{R}_{\lambda V}$ \cite{BaFr09,FrRe14,Pe52}.
This operator is defined through the famous Bogoliubov' formula, which requires the introduction of Wick polynomials and of the time-ordered product \cite{BrFr00,EpGl73,HoWa01,HoWa02,HoWa05,KhMo14,Khavkine-Melati-Moretti-17} see also \cite[Chap.2]{BrDaFrYn15}.
Once an extension of the time-ordered product has been fixed, the algebra $\mathcal{A}_V$ of interacting observables can be defined with the quantum M\o ller operator $\mathsf{R}_{\lambda V}$ as a $*$-subalgebra of the algebra $\mathcal{A}[[\lambda]]$ of formal power series in $\lambda$ with values in $\mathcal{A}$.
The resulting algebra $\mathcal{A}_V$ satisfies the condition of covariance \cite{BrFrVe03,Kay92} as well as the time-slice axiom \cite{ChFr09}.
The whole construction applies assuming that the perturbation $V$ is itself an element of the algebra $\mathcal{A}$.
The procedure of removing the compactness in the support of $V$ is known under the name of adiabatic limit.
The latter has been implemented algebraically \cite[Chap.1-2]{BrDaFrYn15} and it has recently been improved thanks to the results of \cite{BaRe16,HaRe16}.
In particular in \cite{BaRe16} the authors have shown convergence of the expectation value of the quantum M\o ller operator in the case of the Sine-Gordon Model.

The problem of identifying physically interesting states on the interacting algebra $\mathcal{A}_V$ has been addressed recently in \cite{FrLi14}.
Therein, the authors successfully applied a construction proposed in \cite{Ar73,BrRo97a,BrRo97b} in the framework of $C^*$-algebras.
The latter allows to construct a thermal equilibrium state $\Omega_{\beta,\lambda V}$ \cite{BrKiRo78,HaHuWi67,HaKaPo74,Ro73,SaVe00} for the interacting theory once a corresponding thermal state $\Omega_\beta$ for the free theory has been given.
In \cite{FrLi14} the construction of $\Omega_{\beta,\lambda V}$ has been achieved in terms of a formal power series in $\lambda$ exploiting the time-slice axiom \cite{ChFr09} of the algebra $\mathcal{A}_V$.
Further results on this state can be found in \cite{DrFaPi17,Drago-Faldino-Pinamonti-18,DrHaPi16,Hack-Verch-2018}.

In this paper we analyse the state $\Omega_{\beta,\lambda V}$ for the case of a quadratic potential $V$.
In this particular case the perturbed free theory leads to another free theory.
This perturbation may model a variation in the mass term of the Klein-Gordon operator $\square+m^2\to\square+m^2+\lambda m_0^2$  -- with $m^2>0,m^2+\lambda m_0^2>0$ -- though more general situations are allowed.
The assumption on $V$ allows to investigate the adiabatic limit of the resulting state $\Omega_{\beta,\lambda V}$, which is computed order-by-order.
The series for the resulting state can be evaluated directly and it is shown to lead to the corresponding state associated with the perturbed free theory, see Theorem \ref{Theorem: main theorem}.

The convergence of the state $\Omega_{\beta,\lambda V}$ in the case of a quadratic potential $V$ is expected -- see for example \cite{Derezinski-Jaksic-Pillet-03,Donald-90,Sakai-87} where non-bounded perturbations of KMS were considered -- however, the results and the tools exploited in proving the main result are noteworthy for several reasons.
First of all, this computation shows that perturbation theory is reliable: The adiabatic limit can be taken order-by-order, leading to a series which sums up to the correct result.
This behaviour is expected but a priori not guaranteed and this result increases the chances of perturbation theory of being the correct approach to interacting theories.

The second remarkable point of this analysis is that the tools used in the proof of the main result can potentially be generalized to a generic non-linear potential $V$.
In particular, the first bit of information exploited in the computation of the adiabatic limit is the possibility to interchange the quantum M\o ller operator $\mathsf{R}_{\lambda V}$ with its classical counterpart $\mathsf{R}^{\textrm{cl}}_{\lambda V}$ \cite{Pe52} -- see equation \eqref{Equation: PPA}.
From a computational point of view, this leads to a great simplification, due to the results of \cite{DaDr16,DrGe16,DrHaPi16}.
From an abstract point of view, equation \eqref{Equation: PPA} can be understood as an effective resummation of the perturbative series and it should be compared to other approaches \cite{Al90,As07,LeB00,St71,St95}.
It would be extremely interesting to understand to which extent equation \eqref{Equation: PPA} can be generalized to non-linear potential $V$.
Most likely, this would allow to interchange the quantum M\o ller operator $\mathsf{R}_{\lambda V}$ with a classical one $\mathsf{R}^{\textrm{cl}}_{\lambda V_{\textrm{eff}}}$, with $V_{\textrm{eff}}$ being an effective potential built out of $V$ \cite{BrDu08,We96}.
From this point of view, the results of this paper can be understood as a promising starting point for an ``effective analysis'' of perturbative AQFT (pAQFT).

Finally, this result points towards a non-perturbative version of pAQFT, whose first steps will be necessarily based on a systematic check of the convergence of the perturbative approach, in the spirit of \cite{BaRe16}.

The paper is organized as follows: In section \ref{Section: Brief resume in pAQFT} we briefly summarize the functional approach to perturbative algebraic quantum field theory (pAQFT) for the Klein-Gordon field on Minkowski spacetime as well as the construction proposed in \cite{FrLi14}.
Section \ref{Section: Main result} contains the main result of the paper, see Theorem \ref{Theorem: main theorem}, which is proved in section \ref{Section: Proof of Main theorem}.
Finally, in section \ref{Section: NESS}, the techniques developed in the previous sections are applied to the non-equilibrium steady state (NESS) \cite{Ru00} constructed in \cite{DrFaPi17}.

\section{Brief resumé of pAQFT}\label{Section: Brief resume in pAQFT}
In this section we give a brief introduction to the quantization of the real scalar Klein-Gordon field in the framework of algebraic quantum field theory \cite{BaFr09,BrFrVe03,DuFr01,FrRe13,FrRe14,Kay92}, see also \cite[Chap.2]{BrDaFrYn15}.
This approach applies on any globally hyperbolic spacetime \cite{BaGi12,BeSa05,BeSa06}, and it is covariant in the sense of a generally covariant local theory introduced in \cite{BrFrVe03,Kay92} see also \cite{HoWa01}.
For practical purposes we focus our attention to Minkowski spacetime $M$, because the results of \cite{FrLi14} were developed on this particular background.
The main reference for this section is \cite[Chap.2]{BrDaFrYn15}.

\subsection{Free theory}
In this section we outline the quantization of a \textit{free} real scalar Klein-Gordon field $\phi$, whose dynamics is ruled by the massive Klein-Gordon equation $\square\phi+m^2\phi=0$, $m>0$, $\square=-\eta^{ab}\partial_a\partial_b$ where $\eta=\textrm{diag}(-1,1,1,1)$ -- we exploit natural units $\hbar=c=1$.
We will consider the functional approach \cite{DuFr01}, where the (off-shell) $*$-algebra of observables is identified as that of functionals over kinematic configurations $\phi\in C^\infty(M)$, namely $F\colon C^\infty(M)\to\mathbb{C}$.
For the sake of simplicity, we focus on \textit{polynomial} functionals $F\in\mathcal{P}$, which lead to some simplification without spoiling the full generality of this approach.
Notice that a polynomial functional $F$ is automatically smooth, that is, for all $\phi,\psi\in C^\infty(M)$ the function $\mathbb{R}\ni x\mapsto F(\phi+x\psi)$ is differentiable at $x=0$ and, for all natural numbers $n\geq 1$, its $n$-th derivative at $x=0$ defines a symmetric distribution, denoted $F^{(n)}[\phi]$ and called the $n$-th functional derivative of $F$ at $\phi$.
Explicitly $F(\phi+x\psi)^{(n)}\big|_{x=0}=F^{(n)}[\phi](\psi^{\otimes n})$.
Unless stated otherwise, from now on all functionals will be implicitly considered to be polynomial.

Among all, local functionals will play an important r\^ole in the construction of the algebra of free observables.
A functional $F\colon C^\infty(M)\to\mathbb{C}$ is said to be local if it satisfies the two following conditions:
(i)
$F$ is compactly supported, that is $\textrm{spt}(F):=\overline{\bigcup_{\phi\in C^\infty(M)}\textrm{spt}\big(F^{(1)}[\phi]}\big)$ is compact;
(ii)
for all $n\geq 1$ and $\phi\in C^\infty(M)$, the $n$-th functional derivative of $F$ at $\phi$ is supported on the full diagonal of $M^n$, that is
$\textrm{spt}(F^{(n)}[\phi])\subseteq\{(x_1,\ldots,x_n)\in M^n|\; x_1=\ldots=x_n\}$.
The set of local functionals will be denoted by $\mathcal{P}_{\textrm{loc}}$.

Once equipped with the pointwise product, the set $\mathcal{P}_{\textrm{loc}}$ generates the algebra $\mathcal{P}_{\textrm{mloc}}$ of \textit{multilocal} functionals.
Together with the $*$-involution defined by the complex conjugation $F^*(\phi):=\overline{F(\phi)}$, one obtains a commutative $*$-algebra, identified with that of classical observables for the Klein-Gordon field.
In order to introduce its quantum counterpart, one needs to deform the pointwise product of $\mathcal{P}_{\textrm{mloc}}$.
This is realized by choosing a so-called Hadamard distribution $\omega$ \cite{Ra96,RaVe96}, which is defined as 
a positive distribution $\omega\in C^\infty_{\textrm{c}}(M^2)'$ which  satisfies the canonical commutation relations (CCR), that is $\omega(f,g)-\omega(g,f)=i\mathsf{G}(f,g)$.
Here $\mathsf{G}$ denotes the causal propagator \cite{BaGi12} associated to $\square+m^2$.
Moreover, the Wave Front Set \cite{Ho83} of the distribution $\omega$ is required to satisfy the microlocal spectrum condition \cite{Ra96,RaVe96,Wa94} -- see also equation \eqref{Equation: general momentum expansion of Hadamard distribution}.
The latter requirement ensures that the singular behaviour of $\omega$ is the same as that of the Minkowski vacuum.

Once an Hadamard distribution has been chosen one may define an associative, non-commutative, $\star$-product on $\mathcal{P}_{\textrm{mloc}}$ as follows \cite{BaFr09,Di90,Di92,DuFr01}: for all $F,G\in\mathcal{P}_{\textrm{mloc}}$ one sets
\begin{gather}\label{Equation: definition of star-product}
\big(F\star_\omega G\big)(\phi):=
F(\phi)G(\phi)+
\sum_{n\geq 1}
\frac{1}{n!}
\omega^{\otimes n}\bigg(
F^{(n)}[\phi],
G^{(n)}[\phi]
\bigg)\,.
\end{gather}
Notice that the series is convergent because $F,G$ are assumed to be polynomial functionals.
The $*$-algebra $\mathcal{A}_\omega$ obtained by equipping $\mathcal{P}_{\textrm{mloc}}$ with the $\star$-product \eqref{Equation: definition of star-product} and the $*$-involution given by complex conjugation is called the \textit{algebra of $\omega$-renormalized quantum observables}.
Different choices of $\omega$ lead to $*$-isomorphic algebras: This is a consequence of the fact that, if $\omega,\omega'$ are Hadamard distributions, then $\omega-\omega'\in C^\infty(M^2)$ \cite{Ra96,RaVe96}.

States on $\mathcal{A}_\omega$ are defined as linear, positive and normalized functionals $\Omega\colon\mathcal{A}_\omega\to\mathbb{C}$.
Among all the possible choices, we will mainly consider the one obtained considering the evaluation functional $\Omega(F):=F(0)$.
This defines a so-called quasi-free state \cite[Chap.5]{BrDaFrYn15}, namely a state entirely determined by the distribution $C_{\textrm{c}}^\infty(M)^2\ni(f_1,f_2)\mapsto\Omega(F_{f_1}\star_\omega F_{f_2})$, where $F_{f_k}(\phi)=\int_M f_k\phi$.
The latter distribution is called the two-point function associated to $\Omega$ and coincides with $\omega$.
In general, the two-point function $\omega$ of a Poincaré invariant Hadamard state $\Omega$ can be Fourier expanded
as follows: for all $f,g\in C^\infty_{\textrm{c}}(M)$
\begin{gather}\label{Equation: general momentum expansion of Hadamard distribution}
\omega(f,g)=
\int_{\mathbb{R}^3}\frac{\textrm{d}k}{2\epsilon}
\sum_{\pm}c_\pm(k)\widehat{f}(\pm\epsilon,k)\widehat{g}(\mp\epsilon,-k)\,,\qquad
\epsilon=\epsilon(k):=\sqrt{|k|^2+m^2}\,.
\end{gather}
The functions $c_\pm$ identify completely the state $\Omega$.
Actually $\omega$ is an Hadamard distribution if and only if $c_++c_-\geq 0$, $c_+-c_-=1$ and $c_-$ is smooth and rapidly decreasing.

The algebra $\mathcal{A}_\omega$ is an algebra of \textit{off-shell} functionals, namely functionals which are not constrained by any dynamical requirement.
The \textit{on-shell} algebra of quantum observables $\mathcal{A}_{\omega,\textrm{on}}$ is identified with the quotient of $\mathcal{A}_\omega/\mathcal{I}_\omega$ with respect to the $*$-ideal $\mathcal{I}_\omega$ which contains ``dynamically trivial'' functionals.
For the case of an Hadamard distribution $\omega$ which is a weak bisolution of $\square+m^2$ the ideal $\mathcal{I}_\omega$ consists of functionals vanishing on solutions of the Klein-Gordon equation $\square\phi+m^2\phi=0$.

The on-shell algebra enjoys the remarkable property of the time-slice axiom \cite{ChFr09}, which is described as follows.
Let $\mathcal{O}\subseteq M$ be a region of $M$ such that $J(\mathcal{O}):=J^\uparrow(\mathcal{O})\cup J^\downarrow(\mathcal{O})=M$, where $J^\uparrow(\mathcal{O})$ (resp. $J^\downarrow(\mathcal{O})$) denotes the causal future (resp. past) of $\mathcal{O}$ \cite{Ba15,BaGi12}.
Let $\mathcal{A}_{\omega,\textrm{on}}(\mathcal{O})$ be the on-shell algebra generated by $F\in\mathcal{P}_{\textrm{loc}}$ with $\textrm{spt}(F)\subseteq\mathcal{O}$.
This algebra is clearly embedded in the whole algebra $\mathcal{A}_{\omega,\textrm{on}}$: the time-slice axiom ensures that this embedding is in fact a $*$-isomorphism.
In the following, we will mostly deal with the off-shell algebra.

\subsection{Interacting theory}
Interactions for a real scalar Klein-Gordon field are non-linear corrections to the linear operator $\square+m^2$ which are described by a self-adjoint element of the algebra $V\in\mathcal{A}_\omega$ \cite[Chap.2]{BrDaFrYn15}.
This amounts to assume that the dynamics of the interacting field is ruled by the operator $\square+m^2+\lambda V^{(1)}[\cdot]$, where $\lambda$ is the coupling of the interaction.
Notice that $V$ has compact support so that a perturbative approach is justified: The interacting observables are then expanded in formal power series of $\lambda$ -- which is regarded as a formal parameter -- leading to elements in $\mathcal{P}_{\textrm{mloc}}[[\lambda]]$.

Once this step has been accomplished, it remains to discuss the so-called adiabatic limit, where a suitable limit $\textrm{spt}(V)\to M$ is considered.
In the algebraic setting, this is a two-steps procedure.
On the one hand, the adiabatic limit can be performed at the level of algebras, the so-called algebraic adiabatic limit, leading a $*$-algebra $\mathcal{A}_{V,\textrm{ad}}$.
On the other hand, the adiabatic limit $\textrm{spt}(V)\to 1$ can also be considered on family of functionals $\{\Omega_f\}_f$ such that, for each test function $f$, $\Omega_f$ defines a state for the interacting algebra $\mathcal{A}_V$ with $\textrm{spt}(V)=\textrm{spt}(f)$ -- see for example the family of states identified by \eqref{Equation: definition of interacting thermal state}.
The distributional limit $f\to 1$ is defined in an appropriately sense -- \textit{cf.} Section \ref{Section: Main result} -- and its analysis is ultimately a case-by-case study.
The purpose of this paper is to show the convergence of a particular sequence of states for the algebra obtained with a quadratic interaction $V$.

In the following we briefly sketch the construction of the algebras of interacting observables associated with a perturbation $V\{f\}$, where the notation stresses the dependence of $V$ on the cut-off $f\in C^\infty_{\textrm{c}}(M)$, that is $\textrm{spt}(V\{f\})=\textrm{spt}(f)$.
We will not discuss the construction in full details, referring instead to the vast literature on the topic \cite{BaFr09,BrDaFrYn15,BrFr00,ChFr09,DuFr04,EpGl73,FrRe14,HoWa01,HoWa02,HoWa05}.
In this section we assume that a choice for an Hadamard distribution $\omega$ has been made and denote with $\mathcal{A}:=\mathcal{A}_\omega$ the corresponding algebra.

\subsubsection{Quantum M\o ller operator}
Following \cite[Chap.2]{BrDaFrYn15}, the $*$-algebra of interacting observables of $\mathcal{A}_{V\{f\}}$ is introduced as a $*$-subalgebra of $\mathcal{A}[[\lambda]]$.
This $*$-subalgebra is defined through the so-called quantum M\o ller operator $\mathsf{R}_{\lambda V\{f\}}$  \cite{BaFr09}, which can be defined as a map $\mathsf{R}_{\lambda V\{f\}}\colon\mathcal{P}_{\textrm{loc}}\to\mathcal{A}[[\lambda]]$ by the well-known Bogoliubov formula -- see equation \eqref{Equation: quantum Moller operator}.
The definition of this latter maps requires the introduction of the time-ordered product $\cdot_T$ \cite[Chap.2]{BrDaFrYn15} \cite{HoWa02}.
This is an associative and commutative product on the $*$-subalgebra $\mathcal{P}_{\textrm{mreg}}\subset\mathcal{P}_{\textrm{mloc}}$ made of polynomial functionals with smooth functional derivatives of all orders.
The time-ordered product can be extended to the whole $\mathcal{P}_{\textrm{mloc}}$ with a non-unique extension procedure \cite{EpGl73}, where the ambiguities in the extension are controlled by the so-called renormalization freedoms \cite{HoWa01,HoWa02,HoWa05}.
Once an extension of the time-ordered product has been identified, the quantum M\o ller operator is defined through the Bogoliubov formula
\begin{gather}\label{Equation: quantum Moller operator}
\mathsf{R}_{\lambda V\{f\}}(F):=
\exp_T\big[
i  \lambda V
\big]^{-1}\star_\omega\big(
\exp_T\big[
i\lambda V\big]\cdot_T F
\big)\in\mathcal{A}[[\lambda]]\,,
\end{gather}
where $F\in\mathcal{P}_{\textrm{loc}}$ and $\exp_T$ denotes the exponential computed with the time-ordered product while $\exp_T\big[iV\big]^{-1}$ is the inverse of $\exp_T\big[iV\big]$ with respect to $\star_\omega$.
For the sake of simplicity we just summarize the construction as a definition:
\begin{Definition}
Let $V\{f\}\in\mathcal{P}_{\textrm{loc}}$.
The $*$-algebra of interacting observables for the real scalar Klein-Gordon theory associated with the perturbation $V\{f\}$ is the $*$-subalgebra $\mathcal{A}_{V\{f\}}\subset\mathcal{A}[[\lambda]]$ generated by $\mathsf{R}_{\lambda V\{f\}}(\mathcal{P}_{\textrm{loc}})$.
\end{Definition}
The algebraic adiabatic limit is related to the following properties of the quantum M\o ller operator \cite[Chap.1-2]{BrDaFrYn15}.
Let $f_1,f_2\in C^\infty_{\textrm{c}}(M)$ and $F\in\mathcal{P}_{\textrm{loc}}$, then 
\begin{align}\label{Equation: causal property of Quantum Moller Operator}
	\mathsf{R}_{\lambda V\{f_1\}}(F)=F\,,\qquad
	\textrm{ if }
	J^\downarrow(\textrm{spt}(F))\cap J^\uparrow(\textrm{spt}(V\{f_1\}))=\emptyset\,.
\end{align}
Similarly, if $J^\downarrow(\textrm{spt}(V\{f_1-f_2\}))\cap J^\uparrow(\textrm{spt}(F))=\emptyset$ then there exists a formal unitary $U_{f_1,f_2}\in\mathcal{A[[\lambda]]}$ such that
\begin{align}\label{Equation: causal property of Quantum Moller Operator for adiabatic limit}
	\mathsf{R}_{\lambda V\{f_1\}}=
	U_{f_1,f_2}^{-1}\star
	\mathsf{R}_{\lambda V\{f_2\}}(F)\star
	U_{f_1,f_2}\,.
\end{align}

Out of properties (\ref{Equation: causal property of Quantum Moller Operator}-\ref{Equation: causal property of Quantum Moller Operator for adiabatic limit}) the algebraic adiabatic limit can be performed, leading to a $*$-algebra $\mathcal{A}_{V,\textrm{ad}}$, independent from the cut-off of $V$ \cite[Chap.2]{BrDaFrYn15}.
Actually one considers a net of algebras $\mathcal{O}\mapsto\mathcal{A}_{V\{f\}}(\mathcal{O})$ where $\mathcal{O}$ is any double cone of $M$, that is, there exists $x,y\in M$ such that $\mathcal{O}=J^\uparrow\{x\}\cap J^\downarrow\{y\}$.
For each of these algebra one considers the cut-off $f$ to be in the class $1_{\mathcal{O}}$ of functions $g\in C^\infty_c(M)$ such that $g|_{\mathcal{O}}=1$.
Thanks to property \eqref{Equation: causal property of Quantum Moller Operator for adiabatic limit}, for all $f,g\in 1_{\mathcal{O}}$ the algebras $\mathcal{A}_{V\{f\}}(\mathcal{O})$ and $\mathcal{A}_{V\{g\}}(\mathcal{O})$ are unitary equivalent.
This allows to identify, for each double cone $\mathcal{O}$, the algebra $\mathcal{A}_{V,\textrm{ad}}(\mathcal{O})$ as a direct limit, leading to a net of $*$-algebras in the sense of Haag and Kastler \cite{HaKa64}.
The global algebra $\mathcal{A}_{V,\textrm{ad}}$ can then be identified with the direct limit of this net.

Finally, the interacting $*$-algebra $\mathcal{A}_{V\{f\}}\subseteq\mathcal{A}[[\lambda]]$ can be projected on its on-shell version $\mathcal{A}_{V\{f\},\textrm{on}}:=\mathcal{A}_{V\{f\}}/\mathcal{I}_{V\{f\}}$ where $\mathcal{I}_{V\{f\}}:=\mathcal{A}_{V\{f\}}\cap\mathcal{I}_\omega$.
As for $\mathcal{A}$, the time-slice axiom holds true for $\mathcal{A}_{V\{f\},\textrm{on}}$ as well as for $\mathcal{A}_{V,\textrm{ad},\textrm{on}}$ \cite{ChFr09}.
Once again, the whole construction can be shown to be covariant in the sense of a generally covariant local theory introduced in \cite{BrFrVe03,Kay92} 

In what follows, we will exploit the time-slice axiom and the covariance of the construction.
Indeed, we will focus on the off-shell algebra $\mathcal{A}_{V\{f\}}(J^\uparrow(\Sigma))$, with $\Sigma$ being a Cauchy surface for $M$ \cite{BaGi12,BeSa05,BeSa06}.
If not stated otherwise, in the following we will leave the $\Sigma$-dependence of $\mathcal{A}_{V\{f\}}(J^\uparrow(\Sigma))$ implicit.
In particular, exploiting an arbitrary but fixed inertial frame for which $\Sigma=t^{-1}\{0\}$, we will choose the cut-off $f$ as a product $h\chi$, where $h\in C^\infty_{\textrm{c}}(\mathbb{R}^3)$ and $\chi\in C^\infty_{\textrm{c}}(\mathbb{R})$ with $\textrm{spt}(\chi)\subseteq(-1,+\infty)$.
Moreover, due to property \eqref{Equation: causal property of Quantum Moller Operator} it is not restrictive to assume $\chi\in C^\infty(\mathbb{R})$ be such that $\chi(t)=1$ for $t\geq 0$.

\subsubsection{Interacting thermal states}
In this section we summarize the construction proposed in \cite{FrLi14}.
The latter aims to define an interacting thermal state out of an arbitrary chosen thermal state for the free theory.
This construction is inspired by analogy to the one proposed in \cite{Ar73} in the framework of $C^*$-algebras.

Thermal equilibrium states are identified by the so-called Kubo-Martin-Schwinger (KMS) condition \cite{BrKiRo78,BrRo97a,BrRo97b,HaHuWi67,HaKaPo74,Ro73,SaVe00}.
In the algebraic approach the latter requires the identification of a one-parameter group of automorphism on the algebra of interest, which is interpreted as the group of time translations.
In the case of the free algebra $\mathcal{A}$ this can be defined as follows.
For all $t\in\mathbb{R}$ and $\phi\in C^\infty(M)$ let $\phi_t$ be the time translation of $\phi$ by $t$ (we implicitly fixed an inertial frame).
Then, for all $F\in\mathcal{A}$ the time translation is defined as
\begin{gather}\label{Equation: free one-parameter group}
F\mapsto \tau_t(F):=F_t\,,\quad
F_t(\phi):=
F(\phi_t)\,.
\end{gather}
As for the interacting algebra the one-parameter group is defined on each generator of $\mathcal{A}_{V\{f\}}$ as \cite{FrLi14}
\begin{gather}\label{Equation: interacting one-parameter group}
\mathsf{R}_{\lambda V\{f\}}(F)\mapsto
\tau_{V\{f\},t}\big(
\mathsf{R}_{\lambda V\{f\}}(F)
\big):=
\mathsf{R}_{\lambda V\{f\}}(F_t)\,.
\end{gather}
Once a one-parameter group of $*$-automorphism has been fixed one may introduce KMS states as follows \cite{BrRo97a,BrRo97b,SaVe00}.
\begin{Definition}\label{Definition: KMS state}
Let $\mathsf{A}$ be a topological $*$-algebra and let $\alpha\in\hom(\mathbb{R},\textrm{Aut}(\mathsf{A}))$ be a one-parameter group of $*$-automorphism of $\mathsf{A}$.
A state $\Omega$ over $\mathsf{A}$ is called a $(\beta,\alpha)$-KMS state at inverse temperature $\beta>0$ if, for all $a,b\in\mathsf{A}$, the function $t\mapsto\Omega(a\alpha_t(b))$ admits an analytic continuation -- denoted with $\Omega(a\alpha_z(b))$ -- in the complex strip $\mathcal{S}_\beta:=\{z\in\mathbb{C}|\; 0<\Im z<\beta \}$ which is continuous on the closure $\overline{\mathcal{S}_\beta}$ and such that
\begin{gather}\label{Equation: KMS condition}
\Omega(a\alpha_z(b))|_{z=i\beta}=
\Omega(ba)\,.
\end{gather}
\end{Definition}
In the case of the free algebra $\mathcal{A}$, for all $\beta>0$ there is a unique KMS state $\Omega_\beta$ which is a quasi-free state whose two-point function is given by, \textit{cf.} expression \eqref{Equation: general momentum expansion of Hadamard distribution},
\begin{align}\label{Equation: free thermal state}
	\omega_\beta(f,g):=
	\int_{\mathbb{R}^3}
	\frac{\textrm{d}k}{2\epsilon}
	\sum_{\pm}b_\pm(\beta,\epsilon)\widehat{f}(\pm\epsilon,k)\widehat{g}(\mp\epsilon,-k)
	\,,\qquad
	b_\pm(\beta,\epsilon):=\frac{\mp 1}{e^{\mp\beta\epsilon}-1}\,.
\end{align}
The identity $b_-(\beta,\epsilon)=e^{-\beta\epsilon}b_+(\beta,\epsilon)$ ensures the KMS condition \eqref{Equation: KMS condition} as well as the Hadamard property.

Let now $V\{h\chi\}\in\mathcal{P}_{\textrm{loc}}$.
Following a previous construction in the context of $C^*$-algebras \cite{Ar73}, in \cite{FrLi14} the causality properties (\ref{Equation: causal property of Quantum Moller Operator}-\ref{Equation: causal property of Quantum Moller Operator for adiabatic limit}) were exploited to built an intertwiner between the free time evolution $\tau$ and the interacting time evolution $\tau_{V\{h\chi\}}$.
Actually, for all $F\in\mathcal{A}_{V\{h\chi\}}$ there exists a unitary cocycle $U_{V\{h\chi\}}(t)\in\mathcal{A}[[\lambda]]$ such that
\begin{align}\label{Equation: intertwining property of the cocycle}
	\tau_{V\{h\chi\},t}
	\big[\mathsf{R}_{\lambda V\{h\chi\}}(F)\big]=
	U_{V\{h\chi\}}(t)^{-1}\star_\beta
	\tau_t
	\big[\mathsf{R}_{\lambda V\{h\chi \}}(F)\big]
	\star_\beta
	U_{V\{h\chi \}}(t)\,.
\end{align}
Here we have implicitly identified the free algebra $\mathcal{A}$ with the $\omega_\beta$-renormalized algebra $\mathcal{A}_{\omega_\beta}$ and $\star_\beta:=\star_{\omega_\beta}$ is a short notation.
The cocycle $U_{V\{h\chi \}}(t)$ satisfies the cocycle condition
\begin{align}\label{Equation: cocycle condition}
	U_{V\{h\chi\}}(t+s)=
	U_{V\{h\chi\}}(t)\star_\beta\tau_t\big[U_{V\{h\chi\}}(s)\big]\,,
\end{align}
which has a cohomological interpretation \cite{BuRo76}.
Property \eqref{Equation: cocycle condition} implies that the state
\begin{gather}\label{Equation: definition of interacting thermal state}
	\Omega_{\beta,\lambda V\{h\chi\}}(A):=
	\frac{\Omega_\beta
	\big(
	A\star_\beta U_{V\{h\chi \}}(t)
	\big)}{\Omega_\beta\big(U_{V\{h\chi\}}(t)\big)}\Bigg|_{t=i\beta}
	\qquad\forall A\in\mathcal{A}_{V\{h\chi\}}\,,
\end{gather}
is a well-defined $(\beta,\tau_{V\{h\chi\}})$-KMS state for the interacting algebra $\mathcal{A}_{V\{h\chi\}}$ \cite{FrLi14}.
Notice that the evalutation at $t=i\beta$ is justified at each order in $\lambda$ by the analytic properties of $\Omega_\beta$.

The state $\Omega_{\beta,\lambda V\{h\chi\}}$ enjoys the following expansion \cite[Prop. 3]{FrLi14}
\begin{align}\label{Equation: interacting thermal state}
	\Omega_{\beta,\lambda V\{h\chi\}}(A)=
	\Omega_\beta(A)+
	\sum_{n\geq 1}(-1)^n\int_{\beta S_n}\textrm{d}U\,\Omega_\beta^c
	\Bigg[
	A\otimes\bigotimes_{\ell=1}^n K_{iu_\ell}
	\Bigg]\,,\qquad
	\forall A\in\mathcal{A}_{V\{h\chi\}}\,.
\end{align}
Here, $S_n:=\{U:=(u_1,\ldots,u_n)\in\mathbb{R}^n|\quad 0\leq u_1\leq\ldots\leq u_n\leq 1\}$ is the canonical $n$-dimensional simplex while $K:=\frac{\textrm{d}}{i\textrm{d}t}U_{V\{h\chi\}}(t)\big|_{t=0}=\mathsf{R}_{\lambda V\{h\chi\}}(\lambda V\{h\dot{\chi}\})$, the dot being time derivative.
Moreover, $\Omega_\beta^c$ denotes the connected part of $\Omega_\beta$
which is defined by
\begin{align}\label{Equation: definition of connected functions}
	\Omega^\beta(A_1\star\ldots\star A_n)=
	\sum_{P\in\mathsf{P}\{1,\ldots,n\}}\prod_{I\in P}\Omega_\beta^{\textrm{c}}\bigg(\bigotimes_{\ell\in I}A_\ell\bigg)\,,
	\qquad\forall A_1,\ldots,A_n\in\mathcal{A}\,,\forall n\in\mathbb{Z}_+\,,
\end{align}
together with the condition $\Omega_\beta^{\textrm{c}}(1_{\mathcal{A}})=0$ -- here $\mathsf{P}\{1,\ldots,n\}$ denotes the set of partition of $\{1,\ldots,n\}$ in non-empty subsets.

In \cite{FrLi14} the dependence of the state $\Omega_{\beta,\lambda V\{h\chi\}}$ on the cut-off present in $V$ was studied.
In the massive case, the clustering properties of the state $\Omega_\beta$ guarantees that the limit $h\to 1$ can be performed in the sense of van Hove \cite[Def. 2]{FrLi14}, leading to a KMS state $\Omega_{\beta,\lambda V\{\chi\}}$.
In \cite{DrHaPi16} similar conclusions were drawn also in the massless case, where the lack of clustering properties for $\Omega_\beta$ can be treated by exploiting the so-called principle of perturbative agreement (PPA) \cite{HoWa05}.
In \cite{DrFaPi17}, the long time behaviour of the state $\Omega_{\beta,\lambda V\{\chi\}}$ has been investigated, leading to one of the first example of non-equilibrium steady state (NESS) \cite{Ru00} in the context of Quantum Field Theory -- see also \cite{Hack-Verch-2018}.
In particular this state is defined as the weak limit
\begin{gather}\label{Equation: definition of the NESS}
\Omega_{\textrm{\textsc{ness}}}(A):=
\lim_{t\to\infty}\frac 1t
\int_0^t\textrm{d}s\,\Omega_{\beta,\lambda V\{\chi\}}[\tau_s(A)]\,,
\qquad\forall A\in\mathcal{A}_{V\{\chi\}}\,.
\end{gather}
The thermodynamical properties of $\Omega_{\textrm{\textsc{ness}}}$ have been discussed in \cite{Drago-Faldino-Pinamonti-18}.

\section{Main result}\label{Section: Main result}
The goal of this paper is to study the adiabatic limit of the interacting thermal state $\Omega_{\beta,\lambda Q\{h\chi\}}$ constructed in \cite{FrLi14} in the case of a perturbation given by a quadratic interaction $Q$:
\begin{gather}\label{Equation: quadratic perturbation}
Q\{h\chi\}(\phi):=\frac{m_0^2}{2}\int_M h\chi\phi^2\,,
\end{gather}
where $m_0^2>0$ has the dimension of a squared mass.
The precise definition of the adiabatic limit is the following: After performing the van Hove limit $h\to 1$ of \eqref{Equation: interacting thermal state} we obtain a state $\Omega_{\beta,\lambda Q\{\chi\}}$, which  depends on $\chi$ \cite{FrLi14}.
We then consider $\chi\in C^\infty(\mathbb{R})$ be such that
\begin{align}\label{Equation: chi-properties}
	\textrm{spt}(\chi)\subseteq(-1,+\infty)\,,\qquad\chi(t)=1\,\quad\textrm{ for } t\geq 0\,,
\end{align}
and we set $\chi_\mu(t):=\chi(t/\mu)$ for all $\mu>0$.
We denote with $Q\{\chi_\mu\}$ and $\Omega_{\beta,\lambda Q\{\chi_\mu\}}$ the quadratic perturbation \eqref{Equation: quadratic perturbation} and the corresponding state \eqref{Equation: interacting thermal state} where $\chi$ has been substituted with $\chi_\mu$.
The adiabatic limit of $\Omega_{\beta,\lambda Q\{h\chi\}}$ is, per definition, the weak limit of the sequence of states $\Omega_{\beta,\lambda Q\{\chi_\mu\}}$ as $\mu\to+\infty$.
Notice that this prescription for the adiabatic limit consists in removing the space cut-off $h$ -- which is the ``thermodynamical limit" for the interaction $\lambda Q\{h\chi\}$ -- and then in addressing the limit for the time cut-off $\chi$.
Exchanging the limits $h\to 1,\chi\to 1$ described above would probably lead to a rather trivial result because for fixed $h$ observables $F$ with $\textrm{spt}(F)\cap J(\textrm{spt}(h))=\emptyset$ remain unaffected by the action of the M\o ller operator $\mathsf{R}_{\lambda Q\{h\chi\}}$ -- \textit{cf.} equation \eqref{Equation: causal property of Quantum Moller Operator}.

Thanks to the structural assumption on $Q\{\chi_\mu\}$ -- see in particular equation \eqref{Equation: PPA} -- we will be able to compute the limit $\mu\to+\infty$ of each term in the series \eqref{Equation: interacting thermal state} for $\Omega_{\beta,\lambda Q\{\chi_\mu\}}$.
Moreover, we will be also able to give a closed form for the series itself: Then the resulting state is compared to the KMS state on the algebra of the Klein-Gordon theory with mass $m^2+\lambda m_0^2$.
For the convenience of the reader we state here the main result:
\begin{theorem}\label{Theorem: main theorem}
For a quadratic perturbation $Q\{\chi_\mu\}$ as in \eqref{Equation: quadratic perturbation}, the state $\Omega_{\textrm{\textsc{ad}}}$ defined as the weak limit of the sequence $\Omega_{\beta,\lambda Q\{\chi_\mu\}}$ for $\mu\to+\infty$ is the quasi-free state whose two-point function reads
\begin{gather}\label{Equation: two point function of the NESS state}
\omega_{\textrm{\textsc{ad}}}(f,g)=
\int_{\mathbb{R}^3}
\frac{\textrm{d}k}{2\epsilon_\lambda}
\sum_{\pm}b_\pm(\beta,\epsilon_\lambda)\widehat{f}(\pm\epsilon_{\lambda},k)\widehat{g}(\mp\epsilon_{\lambda},-k)\,,\qquad
\epsilon_\lambda=\epsilon_\lambda(k):=\sqrt{|k|^2+m^2+\lambda m_0^2}\,,
\end{gather}
In other words, in the adiabatic limit, $\Omega_{\beta,\lambda Q\{\chi_\mu\}}$ converges to the KMS state for the Klein-Gordon theory with mass $m^2+\lambda m_0^2$.
\end{theorem}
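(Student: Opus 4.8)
The plan is to start from the explicit series \eqref{Equation: interacting thermal state} for $\Omega_{\beta,\lambda Q\{h\chi\}}$ and to compute the weak limit $\mu\to+\infty$ term by term, exploiting two structural features of the quadratic perturbation $Q$. The first is the principle of perturbative agreement, see equation \eqref{Equation: PPA}, which allows me to replace the quantum M\o ller operator $\mathsf{R}_{\lambda Q}$ by its classical counterpart $\mathsf{R}^{\textrm{cl}}_{\lambda Q}$ throughout. For a quadratic $Q$ the latter is explicit: it is the pull-back along the retarded solution map of the perturbed Klein-Gordon operator $\square+m^2+\lambda m_0^2 h\chi$, hence a \emph{linear} redefinition of the field configuration. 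Consequently the generator $K=\mathsf{R}_{\lambda Q}(\lambda Q\{h\dot\chi\})$ entering \eqref{Equation: interacting thermal state} is, up to an irrelevant $c$-number, an explicit functional quadratic in $\phi$, and the same is true for the interacting fields $\mathsf{R}_{\lambda Q}(F_f)$ built from the linear functionals $F_f(\phi)=\int_M f\phi$, whose test functions get dressed by the retarded map.

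The second feature is that $\Omega_\beta$ is quasi-free. Therefore every connected correlation $\Omega_\beta^{\textrm{c}}[A\otimes\bigotimes_\ell K_{iu_\ell}]$ in \eqref{Equation: interacting thermal state} is evaluated by Wick's theorem on a product of quadratic insertions, so that only the connected graphs with two-valent vertices survive: a single chain of thermal two-point functions \eqref{Equation: free thermal state} joining the two external legs of $A$ through the $n$ vertices $K_{iu_1},\dots,K_{iu_n}$. Choosing $A=\mathsf{R}_{\lambda Q}(F_f)\star_\beta\mathsf{R}_{\lambda Q}(F_g)$ and passing to momentum space by \eqref{Equation: general momentum expansion of Hadamard distribution}, each chain factorizes into factors $b_\pm(\beta,\epsilon)/(2\epsilon)$ and into Fourier transforms of $h\dot\chi$ at the vertices, the imaginary-time translations $K_{iu_\ell}$ being legitimate by the analyticity granted by the KMS property of $\Omega_\beta$. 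After the van Hove limit $h\to1$ of \cite{FrLi14}, spatial momentum is conserved at every vertex.

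The core of the argument is then the adiabatic limit $\chi_\mu\to1$ together with the resummation. Writing $\dot\chi_\mu(t)=\mu^{-1}\dot\chi(t/\mu)$, I would show that, in the limit, the time integrals over $\beta S_n$ collapse onto the configuration in which all vertices carry the same on-shell energy, each vertex contributing a factor $\lambda m_0^2$, so that the $n$-th term converges to $(-\lambda m_0^2)^n$ times a chain of $n+1$ free propagators evaluated at the same $(\pm\epsilon,k)$; the dressing of the external legs feeds the same geometric structure. Summing the series
\begin{gather}
\frac{1}{\epsilon^2}\sum_{n\geq0}\Big(\frac{-\lambda m_0^2}{\epsilon^2}\Big)^n=\frac{1}{\epsilon^2+\lambda m_0^2}=\frac{1}{\epsilon_\lambda^2}
\end{gather}
reproduces exactly the replacement $\epsilon\to\epsilon_\lambda$ in the measure $\textrm{d}k/(2\epsilon)$, in the coefficients $b_\pm(\beta,\epsilon)$ and in the on-shell arguments of $\widehat f,\widehat g$, thereby turning $\omega_\beta$ of \eqref{Equation: free thermal state} into $\omega_{\textrm{\textsc{ad}}}$ of \eqref{Equation: two point function of the NESS state}. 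Finally, because the field redefinition induced by $\mathsf{R}^{\textrm{cl}}_{\lambda Q}$ is linear, all truncated $N$-point functions with $N\neq2$ vanish, so the limit state is automatically quasi-free and is completely determined by the two-point function just computed.

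The step I expect to be the main obstacle is the control of the adiabatic limit of the simplex integrals: after $h\to1$ the time direction is non-compact and the integrands are oscillatory and only conditionally convergent, while $\dot\chi_\mu$ has amplitude $O(\mu^{-1})$ but support of size $O(\mu)$. One must prove that these integrals converge as $\mu\to+\infty$, that the limit selects precisely the energy-conserving configuration described above, and that the limit may be interchanged with the sum over $n$ and with the Wick contractions, so that the geometric resummation is justified termwise. Establishing these uniform estimates, rather than the combinatorics of the resummation, is where the genuine work lies.
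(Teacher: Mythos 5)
Your opening moves coincide with the paper's: equation \eqref{Equation: PPA} to pass to the classical M\o ller operator, quasi-freeness of $\Omega_\beta$ to reduce each connected term to chains with two-valent vertices, and a term-by-term adiabatic limit followed by a resummation. The genuine gap is the resummation mechanism itself. You propose the geometric series $\epsilon^{-2}\sum_{n\geq0}(-\lambda m_0^2/\epsilon^2)^n=\epsilon_\lambda^{-2}$, i.e.\ the standard off-shell resummation of mass insertions, and claim it simultaneously performs the replacement $\epsilon\to\epsilon_\lambda$ in the measure, in the on-shell arguments of $\widehat f,\widehat g$, and in the thermal coefficients. This cannot be the structure of the series \eqref{Equation: interacting thermal state}, for two concrete reasons. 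First, in the limit $\mu\to+\infty$ the integrand of the $n$-th term becomes independent of $U=(u_1,\dots,u_n)$, so the simplex integration yields $\int_{\beta S_n}\textrm{d}U=\beta^n/n!$: the series is forced to be of exponential (Taylor) type, with factorial suppression and explicit powers of $\beta$, which is incompatible with a geometric series whose ratio $-\lambda m_0^2/\epsilon^2$ contains no $\beta$ at all. Second, a geometric sum is a rational function of $\lambda m_0^2/\epsilon^2$ and can never convert $b_\pm(\beta,\epsilon)$ into $b_\pm(\beta,\epsilon_\lambda)$, which is transcendental in $\lambda m_0^2$. The correct mechanism is different in kind: the classical M\o ller operator is used \emph{exactly}, not expanded, so the mode change $\epsilon\to\epsilon_\lambda$ is already achieved non-perturbatively (Lemma \ref{Lemma: classical adiabatic limit of KMS state}), but with the \emph{wrong} thermal coefficients $b_\pm(\beta,\epsilon)$; the entire perturbative series of $Q\{\dot\chi_\mu\}$ insertions serves only to correct those coefficients, its $n$-th term converging to
\begin{gather*}
\int_{\mathbb{R}^3}\frac{\textrm{d}k}{2\epsilon_\lambda}\sum_\pm\widehat{f}(\pm\epsilon_\lambda,k)\widehat{g}(\mp\epsilon_\lambda,-k)\,
\frac{1}{n!}\bigg[\frac{\beta\lambda m_0^2}{(\epsilon_\lambda+\epsilon)\epsilon}\bigg]^n\partial_\beta^n b_\pm(\beta,\epsilon)\,,
\end{gather*}
which sums to the Taylor series of $b_\pm\big(\beta+\beta\lambda m_0^2[(\epsilon_\lambda+\epsilon)\epsilon]^{-1},\epsilon\big)=b_\pm(\beta,\epsilon_\lambda)$.

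Moreover, the analytic work you defer as the ``main obstacle'' is exactly where the proof's content lies, and it is resolved not by uniform estimates on oscillatory chains of free propagators but by the mode representation \eqref{Equation: two point function of KMS under the pull-back of classical Moller} of the pulled-back state together with Lemma \ref{Lemma: useful limits}: each vertex integral $\int T_{k,\mu}^2\,\dot\chi_\mu\to0$ (a Riemann--Lebesgue argument), while $\int|T_{k,\mu}|^2\,\dot\chi_\mu\to(\epsilon_\lambda+\epsilon)^{-1}$, so in each chain only the contractions carrying $|T_{k,\mu}|^2$ at every vertex survive. The surviving contribution of a permutation $\sigma\in\wp_n$ depends only on its number of descents, producing $b_\pm^{n+1-j}b_\mp^{j}$ for $\sigma\in\wp_{n,j}$, and the sum over permutations weighted by the Eulerian numbers $c_{n,j}=|\wp_{n,j}|$ reassembles into $\partial_\beta^n b_\pm$ via \eqref{Equation: n-order beta derivative of Bose Einstein factor}. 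None of this combinatorics appears in your proposal, and without it the termwise limits do not assemble into anything summable. Your closing remark on quasi-freeness (two-valent vertices force chains, hence truncated $N$-point functions with $N>2$ vanish) is sound and agrees with the paper's treatment of general local $F,G$.
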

\begin{remark}
\textit{(i)}
We stress that we do not make any claims about the case of a tachyonic (imaginary) mass $m^2+\lambda m_0^2<0$.
\textit{(ii)}
Notice that the general form of a purely quadratic local functional would contain first derivatives of the field, \textit{i.e.}, terms proportional to $\partial_a\phi\partial_b\phi,\phi\partial_a\phi$.
While the latter term would not be a great deal and may model the presence of an external heat flux (described by an interaction $\sim\phi Q^a\partial_a\phi$), the former one would spoil some of the result of \cite{DrHaPi16} which we will need in the following sections.
\end{remark}

\subsection{Preliminary observations}
\subsubsection{Quantum M\o ller operator}\label{Section: Quantum Moller operator}
In this section we describe how the simple structure of the interaction potential $Q\{\chi_\mu\}$ given in \eqref{Equation: quadratic perturbation} allows to simplify the expression involving the quantum M\o ller operator.

Indeed, with reference to \cite{DrHaPi16}, we recall that for any quadratic local functional $Q\{\chi_\mu\}$ it holds
\begin{gather}\label{Equation: PPA}
	\mathsf{R}_{\lambda Q\{\chi_\mu\}}=
	\mathsf{R}^{\textrm{cl}}_{\lambda Q\{\chi_\mu\}}\circ\gamma_{\lambda Q\{\chi_\mu\}}\,,
\end{gather}
where $\mathsf{R}^{\textrm{cl}}_{\lambda Q\{\chi_\mu\}}$ is the classical M\o ller operator \cite{Pe52} while $\gamma_{\lambda Q\{\chi_\mu\}}$ is a contraction map between local functionals -- see \cite{DrHaPi16} for details.
The classical M\o ller operator $\mathsf{R}^{\textrm{cl}}_{\lambda Q\{\chi_\mu\}}\colon\mathcal{P}_{\textrm{loc}}\to\mathcal{A}[[\lambda]]$ can be thought as the classical limit of $\mathsf{R}_{\lambda Q\{\chi_\mu\}}$ \cite{DaDr16,DrHaPi16,DuFr01bis,DuFr04}.
Actually, it is an \textit{exact}, \textit{i.e.} non-perturbative, $*$-isomorphism $\mathsf{R}^{\textrm{cl}}_{\lambda Q\{\chi_\mu\}}\colon\mathcal{A}_{\lambda Q\{\chi_\mu\}}\to\mathcal{A}$ between the algebra $\mathcal{A}_{\lambda Q\{\chi_\mu\}}$ of quantum observables associated to the free Klein-Gordon field whose dynamics is ruled by the operator $\square+m^2+\lambda m_0^2\chi_\mu$ and the algebra $\mathcal{A}$.
Its pull-back action on states has been studied in \cite{DaDr16,Drago-Murro-2017} and will be exploited in the following -- \textit{cf.} equation \eqref{Equation: two point function of KMS under the pull-back of classical Moller}.

The main feature of $\gamma_{\lambda Q\{\chi_\mu\}}$ is that it does not increase the number of fields present in each observable.
Hence, when applied on linear or quadratic functionals, $\gamma_{\lambda Q\{\chi_\mu\}}$ is the identity up to constant, actually
\begin{gather*}
	\gamma_{\lambda Q\{\chi_\mu\}}(F)= F,\qquad
	\gamma_{\lambda Q\{\chi_\mu\}}(Q') = Q'+c\,,
\end{gather*}
where $F(\phi):=\int_Mf\phi$ and $Q'$ is any quadratic functional.
The conclusion is that, on linear and quadratic functionals, the quantum M\o ller operator $\mathsf{R}_{\lambda Q\{\chi_\mu\}}$ and the classical M\o ller operator $\mathsf{R}^{\textrm{cl}}_{\lambda Q\{\chi_\mu\}}$ coincide up to constant.
The latter will play no r\^ole in the subsequent discussion due to the presence of the connected part $\Omega_\beta^{\textrm{c}}$ of $\Omega_\beta$.

This observation allows to rewrite the connected part $\Omega_\beta^{\textrm{c}}$ of $\Omega_\beta$ appearing in \eqref{Equation: interacting thermal state} as
\begin{align*}
	\Omega_\beta^{\textrm{c}}\bigg[
	\mathsf{R}_{\lambda Q\{\chi_\mu\}}(F_1)\star_\beta\dots&\star_\beta\mathsf{R}_{\lambda Q\{\chi_\mu\}}(F_n)
	\otimes\bigotimes_{\ell=1}^n\big[\mathsf{R}_{\lambda Q\{\chi_\mu\}}Q\{\dot{\chi}_\mu\}\big]_{i u_\ell}
	\bigg]\\ &=
	\Omega_\beta^{\textrm{c}}\bigg[
	\mathsf{R}^{\textrm{cl}}_{\lambda Q\{\chi_\mu\}}(F_1)\star_\beta\dots\star_\beta\mathsf{R}^{\textrm{cl}}_{\lambda Q\{\chi_\mu\}}(F_n)
	\otimes\bigotimes_{\ell=1}^n\big[\mathsf{R}^{\textrm{cl}}_{\lambda Q\{\chi_\mu\}}Q\{\dot{\chi}_\mu\}\big]_{i u_\ell}
	\bigg]\\ &=
	\big[
	\Omega_\beta\circ\mathsf{R}^{\textrm{cl}}_{\lambda Q\{\chi_\mu\}}
	\big]^{\textrm{c}}
	\bigg[
	F_1\star_{\lambda Q\{\chi_\mu\}}\dots\star_{\lambda Q\{\chi_\mu\}} F_n
	\otimes\bigotimes_{\ell=1}^nQ\{\dot{\chi}_\mu\}_{i u_\ell}
	\bigg]\,,
\end{align*}
where
we exploited the fact $\Omega_\beta^{\textrm{c}}(1_\mathcal{A})=0$.
The $\star$-product $\star_{\lambda Q\{\chi_\mu\}}$ is the one induced by the $*$-isomorphism $\mathsf{R}^{\textrm{cl}}_{\lambda Q\{\chi_\mu\}}$, \textit{i.e.} it is induced by $\omega_\beta\circ\mathsf{R}^{\textrm{cl}}_{\lambda Q\{\chi_\mu\}}$.
This computation shows that the $\chi_\mu$-dependent part of \eqref{Equation: interacting thermal state} is either in the appearance of $Q\{\dot{\chi}_\mu\}$-terms in the connected function or in the presence of the pull-back state $\Omega_\beta\circ\mathsf{R}^{\textrm{cl}}_{\lambda Q\{\chi_\mu\}}$.

In the following we will compute the limit $\mu\to+\infty$ of each term of the series \eqref{Equation: interacting thermal state}, exploiting the exact knowledge provided by $\mathsf{R}^{\textrm{cl}}_{\lambda Q\{\chi_\mu\}}$.
In a sense, equation \eqref{Equation: PPA} allows to switch to an effective description where the classical part of the perturbative series in $\lambda$ has already been summed, leaving untouched the pure quantum contributions.
This simplification is expected to hold for a more general potential $V$, though it would probably appear as $\mathsf{R}_{\lambda V\{f\}}=\mathsf{R}^{\textrm{cl}}_{\lambda V_{\textrm{eff}}\{f\}}\circ\gamma_{\lambda V\{f\}}$, with $V_{\textrm{eff}}\{f\}$ being an effective potential, perturbatively built out of $V\{f\}$ \cite{BrDu08,We96}.

As explained in \cite{DaDr16,DrHaPi16} the state $\Omega_\beta\circ\mathsf{R}^{\textrm{cl}}_{\lambda Q\{\chi_\mu\}}$ is a quasi-free state whose two-point function $\omega_{\lambda V\{\chi_\mu\}}$ is given by
\begin{gather}\label{Equation: two point function of KMS under the pull-back of classical Moller}
\omega_{\lambda V\{\chi_\mu\}}(f,g)=
\int_{\mathbb{R}}\textrm{d}t
\int_{\mathbb{R}}\textrm{d}t'
\int_{\mathbb{R}^3}\textrm{d}k\,
\widetilde{f}(t,k)
\widetilde{g}(t',k)\bigg[
b_+(\beta,\epsilon)T_{k,\mu}(t)\overline{T_{k,\mu}(t')}+
b_-(\beta,\epsilon)\overline{T_{k,\mu}(t)}T_{k,\mu}(t')
\bigg]\,,
\end{gather}
where $f,g\in C^\infty_{\textrm{c}}(\mathbb{R}^4)$ and $\widetilde{f},\widetilde{g}$ denotes the Fourier transform of $f,g$ in three momentum $k\in\mathbb{R}^3$.
The modes $T_{k,\mu}(t)$ are solutions of the following differential equation
\begin{align}\label{Equation: interpolating modes}
	\ddot{T}_{k,\mu}(t)+\epsilon_\mu(k,t)^2T_{k,\mu}(t)=0\,,\qquad
	T_{k,\mu}(t)=\frac{e^{-i\epsilon t}}{\sqrt{2\epsilon}}
	\qquad\textrm{ for }
	t\notin\textrm{spt}(\chi)\,,
\end{align}
where $\epsilon_\mu(k,\tau):=\sqrt{\epsilon(k)^2+(\epsilon_\lambda(k)^2-\epsilon(k)^2)\chi_\mu(t)}$, subject to the Wronskian condition
\begin{gather}\label{Equation: Wronski condition}
\overline{\dot{T}}_{k,\mu}(t)T_{k,\mu}(t)-\overline{T_{k,\mu}(t)}\dot{T}_{k,\mu}(t)=i\,.
\end{gather}

The limit of this latter state as $\mu\to+\infty$ was investigated in \cite{DaDr16,DrGe16,DrHaPi16}:
\begin{lemma}[\cite{DrGe16}]\label{Lemma: classical adiabatic limit of KMS state}
The limit $\mu\to+\infty$ of the sequence $\Omega_\beta\circ\mathsf{R}^{\textrm{cl}}_{\lambda Q\{\chi_\mu\}}$ exists and defines a quasi-free state $\Omega_{\textrm{\textsc{ad},cl}}$ whose two point function $\omega_{\textrm{\textsc{ad},cl}}$ reads
\begin{gather}\label{Equation: two point function of KMS under classical adiabatic limit}
\omega_{\textrm{\textsc{ad},cl}}(f,g):=
\int_{\mathbb{R}^3}
\frac{\textrm{d}k}{2\epsilon_\lambda}
\sum_{\pm}b_\pm(\beta,\epsilon)\widehat{f}(\pm\epsilon_{\lambda},k)\widehat{g}(\mp\epsilon_{\lambda},-k)\,.
\end{gather}
\end{lemma}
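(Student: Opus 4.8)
The plan is to reduce the statement to a controlled analysis of the interpolating modes $T_{k,\mu}$ and then to invoke the adiabatic theorem. Starting from the explicit expression \eqref{Equation: two point function of KMS under the pull-back of classical Moller}, I observe that the whole $\mu$-dependence sits in the mode bilinears $T_{k,\mu}(t)\overline{T_{k,\mu}(t')}$ and their complex conjugates, paired against the test data $\widetilde f(\cdot,k),\widetilde g(\cdot,k)$, which are smooth and compactly supported in the time variable since $f,g\in C^\infty_{\textrm c}(\mathbb R^4)$. Hence it suffices to (i) determine the pointwise limit of these bilinears for $t,t'$ ranging in a fixed compact set, and then (ii) justify exchanging $\lim_{\mu\to\infty}$ with the $t,t',k$ integrations. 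Comparison with the free two-point function \eqref{Equation: free thermal state} shows that the claimed result \eqref{Equation: two point function of KMS under classical adiabatic limit} is exactly what one obtains by replacing, in the positive/negative frequency decomposition, the free mode $e^{-i\epsilon t}/\sqrt{2\epsilon}$ by $e^{-i\epsilon_\lambda t}/\sqrt{2\epsilon_\lambda}$ while keeping the thermal weights $b_\pm(\beta,\epsilon)$ untouched; this is the target I aim to produce, and it reflects the fact that the \emph{classical} M\o ller operator transports the initial mode occupation without rethermalising it.

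The core of the argument is the large-$\mu$ behaviour of the solutions of \eqref{Equation: interpolating modes}. Since $\chi_\mu(t)=\chi(t/\mu)$, rescaling time as $s:=t/\mu$ recasts \eqref{Equation: interpolating modes} in the standard adiabatic form $\Phi''(s)+\mu^2\,\omega(s)^2\Phi(s)=0$, with slowly varying frequency $\omega(s):=\sqrt{\epsilon(k)^2+(\epsilon_\lambda(k)^2-\epsilon(k)^2)\chi(s)}$ and large parameter $\mu$. The non-tachyonic hypothesis $m^2+\lambda m_0^2>0$ guarantees that $\omega(s)^2$ is bounded away from zero uniformly in $s$, so there are no turning points, while $\omega$ is constant ($=\epsilon$) in the far past and ($=\epsilon_\lambda$) for $s\geq 0$. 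For such a profile the adiabatic theorem applies: writing, for $t$ in a fixed compact interval (where $\chi_\mu(t)\to 1$ uniformly, and exactly $\chi_\mu\equiv 1$ for $t\geq 0$), $T_{k,\mu}(t)=\frac{1}{\sqrt{2\epsilon_\lambda}}\big(\alpha_\mu e^{-i\epsilon_\lambda t}+\beta_\mu e^{i\epsilon_\lambda t}\big)+o(1)$, the Bogoliubov coefficients satisfy $|\alpha_\mu|^2-|\beta_\mu|^2=1$ (a consequence of the Wronskian normalisation \eqref{Equation: Wronski condition}) together with $|\beta_\mu|\to 0$ and $|\alpha_\mu|\to 1$ as $\mu\to\infty$.

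From here the limit of the bilinears is immediate and, crucially, free of the divergent adiabatic phase: one has $T_{k,\mu}(t)\overline{T_{k,\mu}(t')}=\frac{1}{2\epsilon_\lambda}\big(|\alpha_\mu|^2 e^{-i\epsilon_\lambda(t-t')}+|\beta_\mu|^2 e^{i\epsilon_\lambda(t-t')}+\textrm{cross}\big)\to\frac{1}{2\epsilon_\lambda}e^{-i\epsilon_\lambda(t-t')}$, since the accumulated phase enters only through $|\alpha_\mu|^2\to 1$, while the $|\beta_\mu|^2$ and mixed $\alpha_\mu\overline{\beta_\mu}$ terms vanish. Inserting this into \eqref{Equation: two point function of KMS under the pull-back of classical Moller} and carrying out the (now factorised) time integrals produces $\frac{1}{2\epsilon_\lambda}\widehat f(\pm\epsilon_\lambda,k)\,\widehat g(\mp\epsilon_\lambda,-k)$ after matching Fourier conventions, which reassembles precisely into \eqref{Equation: two point function of KMS under classical adiabatic limit}. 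Finally, $\Omega_{\textrm{\textsc{ad},cl}}$ is automatically a quasi-free state: each $\Omega_\beta\circ\mathsf R^{\textrm{cl}}_{\lambda Q\{\chi_\mu\}}$ is quasi-free (the pull-back of a state under the $*$-isomorphism $\mathsf R^{\textrm{cl}}_{\lambda Q\{\chi_\mu\}}$), and the weak limit of a convergent sequence of two-point functions of quasi-free states is again the two-point function of a quasi-free state, with the CCR and positivity inherited in the limit.

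I expect the genuine difficulty to lie in step (ii), namely upgrading the pointwise convergence of the bilinears to convergence of the full integral uniformly enough in $k$ to license dominated convergence in the $k$-integral of \eqref{Equation: two point function of KMS under the pull-back of classical Moller}. The effective adiabatic parameter is $\mu\,\omega(s)\sim\mu\,\epsilon_\lambda(k)$, so large $|k|$ is increasingly adiabatic and helps the estimate $|\beta_\mu(k)|\to 0$; nevertheless one must exhibit a bound on $|\beta_\mu(k)|$ and on the $o(1)$ remainders that is integrable against the rapidly decreasing profiles $\widetilde f(\cdot,k),\widetilde g(\cdot,k)$ and the weights $b_\pm(\beta,\epsilon)/\epsilon_\lambda$. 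A convenient route is to control the modes through an energy-type a priori bound stemming from \eqref{Equation: Wronski condition} --- which pins $|T_{k,\mu}|$ on the compact time-support --- together with a quantitative adiabatic estimate (e.g.\ an integration-by-parts/Gr\"onwall argument on the first-order Bogoliubov system for $\alpha_\mu,\beta_\mu$) yielding $|\beta_\mu(k)|$ decaying faster than any inverse power of $\mu$ for the smooth switch-on $\chi$. The smoothness of $\chi$, which makes $\omega$ exactly constant outside a compact $s$-interval, is precisely what renders the reflection non-perturbatively small and secures the exchange of limit and integration.
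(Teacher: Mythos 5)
Your proposal is correct and follows essentially the same route as the proof the paper relies on (the Lemma is imported from \cite{DrGe16}, with the key facts recalled in the proof of Lemma \ref{Lemma: useful limits}): there the adiabatic behaviour of the modes $T_{k,\mu}$ is controlled by comparison with the WKB modes $T_{\textrm{a},k,\mu}$ of \eqref{Equation: auxiliary modes}, namely $|T_{k,\mu}-T_{\textrm{a},k,\mu}|\to 0$ uniformly in $t$, which is precisely your statement $|\beta_\mu|\to 0$, $|\alpha_\mu|\to 1$ rephrased in Bogoliubov-coefficient form. The two further ingredients you identify --- cancellation of the divergent adiabatic phase in the bilinears $T_{k,\mu}(t)\overline{T_{k,\mu}(t')}$ (only $|\alpha_\mu|^2$ survives) and dominated convergence in $k$ using the uniform mode bounds together with the rapid decay of $\widetilde f,\widetilde g$ --- are exactly those used in the cited proof.
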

\begin{remark}
As remarked in \cite{DrGe16} the limit for $\mu\to+\infty$ of a thermal state $\Omega_\beta$ under the action of the classical M\o ller operator $\mathsf{R}^{\textrm{cl}}_{\lambda Q\{\chi_\mu\}}$ fails to be the corresponding thermal state for the theory with mass $m^2+\lambda m_0^2$.
The reason can be traced back to the fact that, while the $\epsilon$-modes of the states are correctly changed into the $\epsilon_\lambda$ ones -- \textit{i.e.} $T_{k,\mu}(t)\overline{T_{k,\mu}(t')}\to(2\epsilon_\lambda)^{-1}\exp[-i\epsilon_\lambda(t-t')]$ -- the thermal coefficients $b_\pm(\beta,\epsilon)$ remain untouched.
Theorem \ref{Theorem: main theorem} shows that the terms needed to restore the KMS property are exactly those provided by the perturbative series \eqref{Equation: interacting thermal state}.
\end{remark}
For future convenience we state the following lemma.
\begin{lemma}\label{Lemma: useful limits}
	Let $\chi\in C^\infty(\mathbb{R})$ with the property \eqref{Equation: chi-properties} and set $\chi_\mu(t):=\chi(t/\mu)$ for $\mu>0$.
	For any $k\in\mathbb{R}^3$, let $T_{k,\mu}(t)$ be the modes defined as in \eqref{Equation: interpolating modes}.
	Then
	\begin{gather}\label{Equation: useful limits}
		\lim_{\mu\to+\infty}
		\int_{\mathbb{R}}\textrm{d}t\;
		T_{k,\mu}(t)^2
		\frac{\textrm{d}}{\textrm{d}t}\chi_\mu(t)=0\,,\qquad
		\lim_{\mu\to+\infty}
		\int_{\mathbb{R}}\textrm{d}t\;
		\big|
		T_{k,\mu}(t)
		\big|^2
		\frac{\textrm{d}}{\textrm{d}t}\chi_\mu(t)=
		\frac{1}{\epsilon_\lambda+\epsilon}\,.
	\end{gather}
\end{lemma}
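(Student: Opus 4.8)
The plan is to expose the adiabatic character of the limit by rescaling time, to reduce the modes to their leading WKB form, and then to evaluate the two integrals by non-stationary phase and by an elementary change of variables respectively. First I would rescale $t=\mu s$. Since $\frac{\textrm{d}}{\textrm{d}t}\chi_\mu(t)=\mu^{-1}\dot\chi(t/\mu)$ and $\textrm{spt}(\dot\chi)\subseteq[-1,0]$ by \eqref{Equation: chi-properties}, the measure $\frac{\textrm{d}}{\textrm{d}t}\chi_\mu(t)\,\textrm{d}t$ turns into $\dot\chi(s)\,\textrm{d}s$, supported on the fixed compact set $[-1,0]$, and the two integrals in \eqref{Equation: useful limits} become
\begin{gather*}
\int_{\mathbb{R}}\textrm{d}s\,\big[T_{k,\mu}(\mu s)\big]^2\dot\chi(s)\,,\qquad
\int_{\mathbb{R}}\textrm{d}s\,\big|T_{k,\mu}(\mu s)\big|^2\dot\chi(s)\,.
\end{gather*}
Setting $\widetilde{T}(s):=T_{k,\mu}(\mu s)$, equation \eqref{Equation: interpolating modes} reads $\widetilde{T}''(s)+\mu^2E(s)^2\widetilde{T}(s)=0$ with $E(s):=\sqrt{\epsilon^2+(\epsilon_\lambda^2-\epsilon^2)\chi(s)}$, a $\mu$-independent frequency bounded below by a positive constant on $[-1,0]$. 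This is a one-dimensional Schr\"odinger-type equation with large parameter $\mu$, i.e.\ precisely the adiabatic regime.

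Next I would use the adiabatic/WKB description of $\widetilde{T}$. For $s<-1$ one has $\chi(s)=0$, so $E(s)=\epsilon$ and the prescribed data $e^{-i\epsilon t}/\sqrt{2\epsilon}$ is, up to an $s$-independent phase, exactly the leading positive-frequency WKB mode $(2E(s))^{-1/2}\exp[-i\mu\psi(s)]$ with $\psi(s):=\int_{-1}^{s}E(\sigma)\,\textrm{d}\sigma$ and $\psi'=E$. Because $\chi\in C^\infty$ switches the frequency over the slow scale $\mu$, the absence of particle creation -- the mechanism underlying Lemma \ref{Lemma: classical adiabatic limit of KMS state} and established in \cite{DrGe16} -- forces the Bogoliubov coefficient of the conjugate mode to vanish as $\mu\to+\infty$. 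Hence, uniformly on $\textrm{spt}(\dot\chi)$,
\begin{gather*}
2E(s)\big|T_{k,\mu}(\mu s)\big|^2\longrightarrow 1\,,\qquad
\big[T_{k,\mu}(\mu s)\big]^2=\frac{e^{-2i\mu\psi(s)}}{2E(s)}\big(1+o(1)\big)\,.
\end{gather*}

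The two limits then follow. In the first integral the phase is non-stationary, since $\psi'=E$ is bounded away from zero on $\textrm{spt}(\dot\chi)$; as the amplitude is smooth and compactly supported, the Riemann--Lebesgue (non-stationary phase) lemma drives the oscillatory integral, together with its $o(1)$ remainder, to zero. In the second integral the rapid phase cancels in $|T_{k,\mu}|^2$, and passing the uniform limit inside yields $\int_{\mathbb{R}}\textrm{d}s\,\dot\chi(s)/(2E(s))$; since the integrand is an exact differential in $\chi$ and $\chi$ runs from $0$ to $1$, this equals
\begin{gather*}
\int_0^1\frac{\textrm{d}u}{2\sqrt{\epsilon^2+(\epsilon_\lambda^2-\epsilon^2)u}}
=\frac{\epsilon_\lambda-\epsilon}{\epsilon_\lambda^2-\epsilon^2}
=\frac{1}{\epsilon_\lambda+\epsilon}\,,
\end{gather*}
as claimed. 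The main obstacle is the uniform control of the WKB approximation on the transition region: one must show that the negative-frequency Bogoliubov coefficient genuinely vanishes (so that the limit of $2E|T_{k,\mu}|^2$ is $1$ and not a larger constant) and that the $o(1)$ errors are uniform enough to be passed through both integrals. This is exactly the no-particle-creation content of the cited adiabatic results, and the rescaling $t=\mu s$ is the device that reduces the present statement to that standard adiabatic estimate.
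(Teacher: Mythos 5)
Your proposal is correct and follows essentially the same route as the paper's own proof: the rescaling $t=\mu s$, the reduction of $T_{k,\mu}$ to its leading positive-frequency WKB form (which the paper implements via the auxiliary modes $T_{\textrm{a},k,\mu}$ of \eqref{Equation: auxiliary modes} together with the uniform estimate $|T_{k,\mu}-T_{\textrm{a},k,\mu}|\to 0$ imported from \cite{DaDr16,DrHaPi16} --- precisely the no-particle-creation input you identify as the main obstacle), the change of variables $u=\chi(s)$ yielding $1/(\epsilon_\lambda+\epsilon)$ for the second integral, and the non-stationary-phase/Riemann--Lebesgue argument killing the first. The only cosmetic difference is that you phrase the key external input as a vanishing Bogoliubov coefficient citing \cite{DrGe16}, while the paper cites the equivalent uniform mode estimate from \cite{DaDr16,DrHaPi16}.
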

\begin{proof}
	We recall a few facts from \cite[Lemma 5.1]{DaDr16}, \cite[Appendix D]{DrHaPi16}.
	First of all we set, for all $k\in\mathbb{R}^3$,
	\begin{align}\label{Equation: auxiliary modes}
		T_{\textrm{a},k,\mu}(t):=
		\frac{1}{\sqrt{2\epsilon_\mu(k,t)}}
		\exp\bigg[
		-i\int_{t_0}^t\textrm{d}s\;
		\epsilon_\mu(k,s)
		\bigg]\,,\qquad
		\epsilon_\mu(k,t)=
		\sqrt{\epsilon^2(k)+(\epsilon_\lambda^2(k)-\epsilon^2(k))\chi_\mu(t)}\,,
	\end{align}
	where $t_0\notin\textrm{spt}(\chi_\mu)$ is arbitrary but fixed.
	Following \cite{DaDr16,DrHaPi16} one finds that, for all $k\in\mathbb{R}^3$ and $t\in\mathbb{R}$, $T_{\textrm{a},k,\mu}(t)\overline{T_{\textrm{a},k,\mu}(t')}\to (2\epsilon_\lambda(k))^{-1}\exp[-i\epsilon_\lambda(k)(t-t')]$ as $\mu\to+\infty$.
	Notice however that $T_{\textrm{a},k,\mu}(t)$ has no limit for $\mu\to+\infty$.
	Finally, $|T_{k,\mu}-T_{\textrm{a},k,\mu}|\to 0$ as $\mu\to+\infty$ uniformly in $t$, see the proof of \cite[Lemma 5.1]{DaDr16} or \cite[Appendix D]{DrHaPi16}.
	
	We now address the second limit in \eqref{Equation: useful limits}:
	\begin{align*}
		\lim_{\mu\to+\infty}
		\int_{\mathbb{R}}\frac{\textrm{d}t}{\mu}\,
		\big|
		T_{k,\mu}(t)
		\big|^2
		\dot{\chi}\bigg(\frac{t}{\mu}\bigg)=
		\lim_{\mu\to+\infty}
		\int_{\mathbb{R}}\textrm{d}t\,
		\big|
		T_{\textrm{a},k,\mu}(\mu t)
		\big|^2
		\dot{\chi}(t)=
		\lim_{\mu\to+\infty}
		\int_{\mathbb{R}}\textrm{d}t\,
		\frac{\dot{\chi}(t)}{2\epsilon_\mu(k,\mu t)}=
		\frac{1}{\epsilon_\lambda(k)+\epsilon(k)}\,.
	\end{align*}
	The first limit in \eqref{Equation: useful limits} is addressed similarly.
	Indeed, since $t_0\leq 0$, one finds, for all $t\in\mathbb{R}$,
	\begin{align*}
		\mu\int_{t_0/\mu}^t
		\epsilon_\mu(k,\mu s)\textrm{d}s-
		\mu\int_0^t
		\epsilon_\mu(k,\mu s)\textrm{d}s
		\longrightarrow_{\mu\to+\infty}
		t_0\epsilon_\lambda(k)\,.
	\end{align*}
	Therefore, as $\mu\to+\infty$, $T_{\textrm{a},k,\mu}(\mu t)^2\simeq(2\epsilon_\mu(k,\mu t))^{-1}\exp\big(i\mu\psi(t)\big)$, with $\dot{\psi}(t)\neq 0$ and the thesis follows by Riemann's Lemma.
\end{proof}

\subsubsection{$\beta$-expansion of the Bose-Einstein factor}\label{Section: beta-expansion for the Bose-Einstein factor}
For later convenience we provide a formula for the $\beta$-derivatives of the ``thermal coefficients'' $b_\pm(\beta,\epsilon)$, which will be useful in the proof of Theorem \ref{Theorem: main theorem}.
We recall that $b_\pm$ are defined by
\begin{gather}
b_\pm(\beta,\epsilon)=
\frac{\mp 1}{e^{\mp\beta\epsilon}-1}\,.
\end{gather}
In particular, $b_-$ is the Bose-Einstein factor.
Thanks to the relations $ b_\pm(\beta,\epsilon)=e^{\pm\beta\epsilon}b_\mp(\beta,\epsilon)$ and $b_+-b_-=1$ one may compute
\begin{gather}\label{Equation: first beta derivative of Bose Einstein factor}
\partial_\beta b_\pm(\beta,\epsilon)=-\epsilon b_+(\beta,\epsilon)b_-(\beta,\epsilon)\,.
\end{gather}
Iterating relation \eqref{Equation: first beta derivative of Bose Einstein factor} we find
\begin{align}\label{Equation: n-order beta derivative of Bose Einstein factor}
	\partial_\beta^nb_\pm =
	(-\epsilon)^n\sum_{k=1}^{n}
	c_{n,k}b_+^{n+1-k}b_-^k\,,
\end{align}
where the coefficients $(c_{n,k})_{n\geq 1,1\leq k\leq n}$ satisfy the following recursion relations:
\begin{align}\label{Equation: recursive relation for the coefficients of the beta derivative}
	c_{n,k}=\Bigg\lbrace
	\begin{array}{l}
	1
	\qquad\textrm{ for }k\in\{1,n\}\\
	kc_{n-1,k}+(n+1-k)c_{n-1,k-1}
	\qquad\textrm{ for }2\leq k \leq n-1\,.
	\end{array}\,,\qquad
	c_{n,k}=c_{n,n+1-k}\,.
\end{align}
From equation \eqref{Equation: recursive relation for the coefficients of the beta derivative} it follows that the coefficients $c_{n,k}$ coincide with the Eulerian numbers $A(n,k)$ \cite[Thm. 1.7]{Bona-12}, that is, $c_{n,k}$ is the number of $n$-permutations with $k-1$ descents.
We recall that, for a given $n$-permutation $\sigma$ of $\{1,\ldots,n\}$, an index $i\in\{1,\ldots,n-1\}$ is called descent of $\sigma$ if $\sigma_i>\sigma_{i+1}$.
In what follows we denote with $\wp_n$ the set of $n$-permutations and with $\wp_{n,k}\subset\wp_n$ the subset of $n$-permutation with $k-1$ descents -- so that $c_{n,k}=|\wp_{n,k}|$.
See \cite{Fewster-Ford-12,Fewster-Siemssen-14} for similar applications of these structures in the QFT framework.

\subsection{Proof of Theorem \ref{Theorem: main theorem}}\label{Section: Proof of Main theorem}
We are now ready to prove Theorem \ref{Theorem: main theorem}.
\begin{proof}(Thm.\ref{Theorem: main theorem})
	We recall that, $Q\{\chi_\mu\}$ is the quadratic functional \eqref{Equation: quadratic perturbation} with cut-off $\chi_\mu$, where $\chi\in C^\infty(\mathbb{R})$ satisfies \eqref{Equation: chi-properties} while $\chi_\mu(t):=\chi(t/\mu)$.
	
	At first, we focus on the case $A=\mathsf{R}_{\lambda Q\{\chi_\mu\}}(F)\star\mathsf{R}_{\lambda Q\{\chi_\mu\}}(G)$, where $F,G\in\mathcal{A}$ are two linear functionals defined by $F(\phi):=\int_M f\phi$, $G(\phi):=\int_M g\phi$ for $f,g\in C^\infty_{\textrm{c}}(M)$.
	The general case will be outlined at the end of the proof.
	
	By exploiting the results of section \ref{Section: Quantum Moller operator} -- see equation \eqref{Equation: PPA} -- we may write the state \eqref{Equation: interacting thermal state} as
	\begin{align}
		\label{Equation: mu-dependent interacting thermal state}
		\Omega_{\beta,\lambda Q\{\chi_\mu\}}\bigg[
		\mathsf{R}_{\lambda Q\{\chi_\mu\}}(F)\star_\beta\mathsf{R}_{\lambda Q\{\chi_\mu\}}(G)
		\bigg]&=
		\omega_{\lambda Q\{\chi_\mu\}}(f,g)\\
		\nonumber
		&+
		\sum_{n\geq 1}(-1)^n\int_{\beta S_n}\textrm{d}U\,
		\big(\Omega_\beta\circ\mathsf{R}^{\textrm{cl}}_{\lambda Q\{\chi_\mu\}}\big)^c
		\Bigg[
		FG\otimes\bigotimes_{\ell=1}^n (\lambda Q\{\dot{\chi}_\mu\})_{iu_\ell}
		\Bigg]\,.
	\end{align}
	Notice that, by Lemma \ref{Lemma: classical adiabatic limit of KMS state}, the first term on the right-hand side of \eqref{Equation: mu-dependent interacting thermal state} tends to $\omega_{\textrm{\textsc{ad},cl}}(f,g)$.
	In what follows, we will consider each term of the series appearing in \eqref{Equation: mu-dependent interacting thermal state} and compute its limit as $\mu\to+\infty$.
	We will then be able to sum the series.
	Notice that the $n$-th term in the series \eqref{Equation: mu-dependent interacting thermal state} is \textit{not} the $n$-th order in perturbation theory of $\Omega_{\beta,\lambda Q\{\chi_\mu\}}$.
	Indeed the state $\Omega_\beta\circ\mathsf{R}^{\textrm{cl}}_{\lambda Q\{\chi_\mu\}}$ depends on the formal parameter $\lambda$.
	The main advantage coming from the results recalled in section \ref{Section: Quantum Moller operator} is that $\Omega_\beta\circ\mathsf{R}^{\textrm{cl}}_{\lambda Q\{\chi_\mu\}}$ can be considered as an exact -- \textit{i.e.} non perturbative -- quantity, and what has to be computed is only the $n$-th order term of the series in \eqref{Equation: mu-dependent interacting thermal state}.
	In other words, we are exploiting a partial summation, in which the contribution in $\lambda$ coming from $\mathsf{R}^{\textrm{cl}}_{\lambda Q\{\chi_\mu\}}$ are recollected.
	
	With this in mind, we now focus our attention on the $n$-th term
	\begin{gather}\label{Equation: connected n-point function contribution}
	\big(
	\Omega_\beta\circ\mathsf{R}^{\textrm{cl}}_{\lambda Q\{\chi_\mu\}}\big)^c
	\Bigg[
	FG\otimes\bigotimes_{\ell=1}^n (\lambda Q\{\dot{\chi}_\mu\})_{iu_\ell}
	\Bigg]\,.
	\end{gather}
	Since the state $\Omega_\beta\circ\mathsf{R}^{\textrm{cl}}_{\lambda Q\{\chi_\mu\}}$ is a quasi-free state with two-point function given by \eqref{Equation: two point function of KMS under the pull-back of classical Moller}, the term \eqref{Equation: connected n-point function contribution} can be expanded graphically as a sum of connected graphs, whose edges are associated with $\omega_{\lambda V\{\chi_\mu\}}$.
	Since $FG$ and $Q\{\chi_\mu\}$ are quadratic functionals, we can write \eqref{Equation: connected n-point function contribution} equivalently as a sum over $n$-permutations.
	Indeed for each $\sigma\in\wp_n$, the corresponding contribution to \eqref{Equation: connected n-point function contribution} is
	\begin{align*}
		\frac{(\lambda m_0^2)^n}{2}\Bigg[
		\int_{\mathbb{R}^{4(n+2)}}
		\textrm{d}Z\,
		\big[f(z_0)g(z_{n+1})+f(z_{n+1})g(z_0)\big]
		\omega_{\lambda Q\{\chi_\mu\}}(z_0,z_{\sigma_1})
		\omega_{\lambda Q\{\chi_\mu\}}(z_{n+1},z_{\sigma_{n}})\cdot\\
		\cdot
		\prod_{j=1}^{n-1}\omega_{\lambda Q\{\chi_\mu\}}(z_{\sigma_j},z_{\sigma_{j+1}})
		\prod_{\ell=1}^n
		(\dot{\chi}_\mu)_{iu_\ell}(z_\ell^0)
		\Bigg]\,,
	\end{align*}
	Here $z_0,\ldots,z_{n+1}\in M$ are arbitrary but fixed points of $M$ and $z^0_\ell$ denotes the time component of $z_\ell$ for $\ell\in\{1,\dots,n\}$.
	Notice that the symmetric term in $f,g$ gives the same contribution and cancels the factor $1/2$ which pops out from formula \eqref{Equation: definition of star-product}.

	For the sake of simplicity, we now provide the explicit computation in the case $n=1$, which will be generalized later.
	For $n=1$ the unique contribution in \eqref{Equation: connected n-point function contribution} is given by
	\begin{align}\label{Equation: contribution for G1}
		\lambda m_0^2
		\int_{\mathbb{R}^{12}}\textrm{d}z_0\textrm{d}z_1\textrm{d}z_2\,
		f(z_0)g(z_2)
		\omega_{\lambda V\{\chi_\mu\}}(z_0,z_1+iue^0)
		\omega_{\lambda V\{\chi_\mu\}}(z_2,z_1+iue^0)
		\dot{\chi}_\mu(z_1^0)\,,
	\end{align}
	Here $y+iue^0$ denotes the complex time translation of the point $y=(\underline{y},y^0)$ by $iue^0$ -- $e^0$ being the unit time vector field -- which we recall is well-defined once exploiting the analytic properties of $\omega_\beta$.
	
	Exploiting the explicit form of the two-point function $\omega_{\lambda V\{\chi_\mu\}}$ -- \textit{cf.} equation \eqref{Equation: two point function of KMS under the pull-back of classical Moller} -- and Lemma \ref{Lemma: useful limits} one finds that, in the limit $\mu\to+\infty$,
	\begin{gather*}
	\eqref{Equation: contribution for G1}\to
	\int_{\mathbb{R}^3}\frac{\textrm{d}k}{2\epsilon_\lambda}\,
	\sum_{\pm}
	\widehat{f}(\pm\epsilon_{\lambda},k)\widehat{g}(\mp\epsilon_{\lambda},-k)
	\frac{\lambda m_0^2}{\epsilon_\lambda+\epsilon}
	b_+(\beta,\epsilon)b_-(\beta,\epsilon)
	\,,
	\end{gather*}
	The final step is to recall \eqref{Equation: first beta derivative of Bose Einstein factor} so that the limit for $\mu\to+\infty$ of the term $n=1$ in \eqref{Equation: mu-dependent interacting thermal state} becomes
	\begin{gather*}
	n=1 \textrm{ term of \eqref{Equation: mu-dependent interacting thermal state}}\rightarrow_{\mu\to+\infty}
	\int_{\mathbb{R}^3}\frac{\textrm{d}k}{2\epsilon_\lambda}
	\sum_\pm
	\widehat{f}(\pm\epsilon_{\lambda},k)\widehat{g}(\mp\epsilon_{\lambda},-k)
	\frac{\beta\lambda m_0^2}{(\epsilon_\lambda+\epsilon)\epsilon}
	\partial_\beta b_\pm(\beta,\epsilon)\,.
	\end{gather*}
	The claim is that this formula can be generalized for all $n\geq 1$.
	Indeed, let consider the contribution to \eqref{Equation: connected n-point function contribution} at order $n\geq 1$.
	The combinatorial expansion gives
	\begin{align}
		\nonumber
		\eqref{Equation: connected n-point function contribution}=
		\sum_{\sigma\in\wp_n}
		(\lambda m_0^2)^n\Bigg[
		\int_{\mathbb{R}^{4(n+2)}}
		\textrm{d}Z\,
		f(z_0)g(z_{n+1})
		\omega_{\lambda Q\{\chi_\mu\}}(z_0,z_{\sigma_1})
		\omega_{\lambda Q\{\chi_\mu\}}(z_{n+1},z_{\sigma_{n}})\cdot\\
		\label{Equation: graph expansion of n-order contribution}
		\prod_{j=1}^{n-1}\omega_{\lambda V\{\chi_\mu\}}(z_{\sigma_j},z_{\sigma_{i+1}})
		\prod_{\ell=1}^n
		(\dot{\chi}_\mu)_{iu_\ell}(z_\ell^0)
		\Bigg]\,,
	\end{align}
	Let us focus on the contribution to \eqref{Equation: graph expansion of n-order contribution} from an arbitrary but fixed permutation $\sigma\in\wp_n$.
	Once the explicit expression \eqref{Equation: two point function of KMS under the pull-back of classical Moller} of $\omega_{\lambda V\{\chi_\mu\}}$ has been inserted into \eqref{Equation: graph expansion of n-order contribution}, one finds a sum of products of factors $b_\pm(\beta,\epsilon)$ with the corresponding modes.
	Notice that, since the limit $h\to 1$ has already been taken, there is a single integration over three momentum $k\in\mathbb{R}^3$.
	Invoking Lemma \ref{Lemma: useful limits} several terms in the sum above disappear.
	Actually, the non-vanishing contributions are those which contain, for each $\ell\in\{1,\ldots, n\}$, the factor $\big|T_k(z_\ell^0)\big|^2$ -- the other products would contain either a factor $\overline{T_k(z_\ell^0)^2}$ or a factor $T_k(z_\ell^0)^2$.
	These non-trivial terms can be computed as follows.
	Let be $j$ the number of descents of $\sigma$, that is, let $j\in\{1,\ldots,n\}$ be such that $\sigma\in\wp_{n,j}$.
	Then, in the adiabatic limit $\mu\to+\infty$, the non-trivial contribution to \eqref{Equation: graph expansion of n-order contribution} obtained from $\sigma$ is
	\begin{align}\label{Equation: contribution of a graph in the subclass n,k}
		\int_{\mathbb{R}^3}\frac{\textrm{d}k}{2\epsilon_\lambda}
		\sum_{\pm}
		\bigg[\frac{\lambda m_0^2}{(\epsilon_\lambda+\epsilon)}
		\bigg]^n
		b_\pm(\beta,\epsilon)^{n+1-j}b_\mp(\beta,\epsilon)^j
		\widehat{f}(\pm\epsilon_\lambda,k)\widehat{g}(\mp\epsilon_\lambda,-k)\,.
	\end{align}
	Thus, the contribution as $\mu\to+\infty$ arising from a $n$-permutation $\sigma\in\wp_n$ depends uniquely on its subclass $\wp_{n,j}$, \textit{i.e.} on the number of descends of $\sigma$.
	With this in mind we may rewrite the limit for $\mu\to+\infty$ of \eqref{Equation: graph expansion of n-order contribution} as follows:
	\begin{align}
		\nonumber
		\eqref{Equation: graph expansion of n-order contribution}
		\stackrel{\mu\to+\infty}{\rightarrow}&
		\int_{\mathbb{R}^3}\frac{\textrm{d}k}{2\epsilon_\lambda}
		\sum_{\pm}
		\widehat{f}(\pm\epsilon_\lambda,k)\widehat{g}(\mp\epsilon_\lambda,-k)
		\bigg[\frac{\lambda m_0^2}{(\epsilon_\lambda+\epsilon)}
		\bigg]^n
		\sum_{j=1}^n
		c_{n,j}
		b_\pm(\beta,\epsilon)^{n+1-j}b_\mp(\beta,\epsilon)^j
		\\
		\label{Equation: limit of the graph expansion of the n-order contribution}
		&=
		\int_{\mathbb{R}^3}\frac{\textrm{d}k}{2\epsilon_\lambda}\sum_{\pm}
		\widehat{f}(\pm\epsilon_\lambda,k)\widehat{g}(\mp\epsilon_\lambda,-k)
		\bigg[\frac{-\lambda m_0^2}{(\epsilon_\lambda+\epsilon)\epsilon}
		\bigg]^n
		\partial_\beta^n b_\pm(\beta,\epsilon)\,,
	\end{align}
	where we exploited the equality $|\wp_{n,k}|=c_{n,k}$ and equation \eqref{Equation: n-order beta derivative of Bose Einstein factor} as well as the symmetry property $c_{n,j}=c_{n,n+1-j}$.

	Summing up, we have computed the limit as $\mu\to+\infty$ of the $n$-order term appearing in \eqref{Equation: mu-dependent interacting thermal state}.
	Indeed, since \eqref{Equation: limit of the graph expansion of the n-order contribution} does not depends on $U=(u_1,\ldots,u_n)$, the integral over the simplex $\beta S_n$ would simply provide a factor $\beta^n(n!)^{-1}$.
	We thus find
	\begin{align*}
		\lim_{\mu\to+\infty}
		\Omega_{\beta,\lambda Q\{\chi_\mu\}}\big[
		\mathsf{R}_{\lambda Q\{\chi_\mu\}}(F)&\star_\beta\mathsf{R}_{\lambda Q\{\chi_\mu\}}(G)
		\big]=
		\omega_{\textrm{\textsc{ad},cl}}(f,g)\\&+
		\sum_{n\geq 1}
		\int_{\mathbb{R}^3}\frac{\textrm{d}k}{2\epsilon_\lambda}
		\sum_{\pm}
		\widehat{f}(\pm\epsilon_\lambda,k)\widehat{g}(\mp\epsilon_\lambda,-k)
		\frac{1}{n!}
		\bigg[\frac{\beta\lambda m_0^2}{(\epsilon_\lambda+\epsilon)\epsilon}
		\bigg]^n
		\partial_\beta^n b_\pm(\beta,\epsilon)\\&=
		\int_{\mathbb{R}^3}\frac{\textrm{d}k}{2\epsilon_\lambda}
		\sum_{\pm}
		\widehat{f}(\pm\epsilon_\lambda,k)\widehat{g}(\mp\epsilon_\lambda,-k)
		b_\pm\bigg[
		\beta+\frac{\beta\lambda m_0^2}{(\epsilon_\lambda+\epsilon)\epsilon},\epsilon
		\bigg]\,,
	\end{align*}
	where we used the explicit form \eqref{Equation: two point function of KMS under classical adiabatic limit} of $\omega_{\textrm{\textsc{ad},cl}}(f,g)$.
	It is then a simple computation to check that
	$b_\pm(\beta+\beta\lambda m_0^2[(\epsilon_\lambda+\epsilon)\epsilon]^{-1},\epsilon)=b_\pm(\beta,\epsilon_\lambda)$.

	The general case for $F,G\in\mathcal{P}_{\textrm{loc}}$ is treated analogously.
	Using relation \eqref{Equation: PPA} one reduces to the state $\Omega_\beta\circ\mathsf{R}^{\textrm{cl}}_{\lambda Q\{\chi_\mu\}}$ applied on local observables $\gamma_{\lambda Q\{\chi_\mu\}}(F), \gamma_{\lambda Q\{\chi_\mu\}}(G)$. Notice that $\gamma_{\lambda Q\{\chi_\mu\}}(F)=F+F'$, where $F'$ is a local functional with less fields than $F$.
	The combinatorial expansion exploited above still applies and the combinatorics reproduces the usual Wick formula for a quasi-free state.
	The thesis follows.
\end{proof}
\begin{remark}
One may wonder about the massless case $m=0$.
In this case the proof of Theorem \ref{Theorem: main theorem} is affected by several infrared divergences, that is the integral over three momentum $k\in\mathbb{R}^3$ is divergent due to the singular behaviour of the Bose-Einstein factor at $k=0$.
For example, the contribution \eqref{Equation: contribution of a graph in the subclass n,k} is divergent due to the presence of the product $b_{\pm}(\beta,\epsilon)^{n+1-j}b_\mp(\beta,\epsilon)^j\simeq|k|^{-n-1}$.
This singular behaviour can be understood by observing that the expansion $b_\pm(\beta,\epsilon_\lambda)=b_\pm\big(\beta+\beta\lambda m_0^2\big[\epsilon(\epsilon+\epsilon_\lambda)\big]^{-1},\epsilon\big)$ becomes singular in the massless case.
\end{remark}

\section{Non-equilibrium steady state}\label{Section: NESS}
In this section we compare the result obtained in Theorem \ref{Theorem: main theorem} with those obtained in \cite{DrFaPi17} where a non-equilibrium steady state (NESS) \cite{Ru00} $\Omega_{\textrm{\textsc{ness}}}$ was built out of $\Omega_{\beta,\lambda Q\{\chi\}}$ with an ergodic mean -- \textit{cf.} equation \eqref{Equation: definition of the NESS}.
In the case of a quadratic perturbation $Q$ as in \eqref{Equation: quadratic perturbation}, the steps of the proof of Theorem \ref{Theorem: main theorem} can be carried out also in this latter case, leading to some simplification.
However, such simplification would not suffice to sum the perturbative series.

As a preliminary result we state the following lemma.
\begin{lemma}\label{Lemma: useful limits for NESS}
	Let $\chi\in C^\infty(\mathbb{R})$ be such that \eqref{Equation: chi-properties} holds true.
	For any $k\in\mathbb{R}^3$, let $T_k(t):=T_{k,\mu=1}(t)$ be the modes defined as in \eqref{Equation: interpolating modes}.
	Then, for all $k\in\mathbb{R}^3$, there exist $A_\pm=A_\pm(k)\in\mathbb{C}$ such that, for all $t_1,t_2\in\mathbb{R}$,
	\begin{subequations}\label{Equation: useful limits for NESS}
	\begin{gather}
		\label{Equation: useful limits for NESS TT}
		\lim_{t\to+\infty}
		\frac{1}{t}\int_0^t\textrm{d}\tau\,
		T_k(t_1+\tau)T_k(t_2+\tau)=
		\frac{A_+A_-}{2\epsilon_\lambda}
		\sum_\pm
		e^{\mp i\epsilon_\lambda(t_1-t_2)}
		\,,\\
		\label{Equation: useful limits for NESS T barT}
		\lim_{t\to+\infty}
		\frac{1}{t}\int_0^t\textrm{d}\tau\,
		T_{k}(t_1+\tau)\overline{T_{k}(t_2+\tau)}=
		\frac{1}{2\epsilon_\lambda}
		\sum_\pm
		|A_\pm|^2e^{\mp i\epsilon_\lambda(t_1-t_2)}
		\,.
	\end{gather}
	\end{subequations}
\end{lemma}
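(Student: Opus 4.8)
The plan is to exploit the defining property $\chi(t)=1$ for $t\geq 0$ in \eqref{Equation: chi-properties}, which forces the effective frequency in \eqref{Equation: interpolating modes} to become constant at positive times. Indeed, for $\mu=1$ one has $\chi_1=\chi$, so $\epsilon_1(k,t)=\sqrt{\epsilon(k)^2+(\epsilon_\lambda(k)^2-\epsilon(k)^2)\chi(t)}=\epsilon_\lambda(k)$ for every $t\geq 0$. On the half-line $t\geq 0$ the mode equation therefore reduces to the constant-coefficient harmonic oscillator $\ddot T_k(t)+\epsilon_\lambda(k)^2T_k(t)=0$, whose general solution is a linear combination of $e^{\pm i\epsilon_\lambda(k)t}$. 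Thus there exist constants $A_\pm=A_\pm(k)\in\mathbb{C}$ --- uniquely fixed by matching with the prescribed data $T_k(t)=e^{-i\epsilon t}/\sqrt{2\epsilon}$ on $t\leq -1$ and integrating the ODE across the transition region $[-1,0]$ --- such that
\begin{gather*}
T_k(t)=\frac{1}{\sqrt{2\epsilon_\lambda}}\Big(A_+e^{-i\epsilon_\lambda t}+A_-e^{+i\epsilon_\lambda t}\Big)\,,\qquad t\geq 0\,.
\end{gather*}
One checks via the Wronskian condition \eqref{Equation: Wronski condition} that these obey $|A_+|^2-|A_-|^2=1$, though this relation is not needed for the statement.

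With this explicit form I would substitute into the two ergodic means. For fixed $t_1,t_2$ and all $\tau$ large enough that $t_1+\tau,t_2+\tau\geq 0$, the integrand factorizes into products of the exponentials $e^{\pm i\epsilon_\lambda(t_j+\tau)}$. Expanding $T_k(t_1+\tau)T_k(t_2+\tau)$ produces two cross terms proportional to $A_+A_-\,e^{\mp i\epsilon_\lambda(t_1-t_2)}$ that are \emph{independent of $\tau$}, together with two diagonal terms $\propto A_\pm^2\,e^{\mp i\epsilon_\lambda(t_1+t_2+2\tau)}$ carrying the rapidly oscillating factor $e^{\mp 2i\epsilon_\lambda\tau}$. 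The analogous expansion of $T_k(t_1+\tau)\overline{T_k(t_2+\tau)}$ yields instead the $\tau$-independent diagonal terms $|A_\pm|^2e^{\mp i\epsilon_\lambda(t_1-t_2)}$ and oscillatory cross terms.

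The heart of the argument is the standard ergodic averaging: under $\frac{1}{t}\int_0^t\mathrm{d}\tau$ and the limit $t\to+\infty$, every summand carrying $e^{\pm 2i\epsilon_\lambda\tau}$ (with $\epsilon_\lambda(k)>0$) is a Ces\`aro mean of a bounded oscillation and vanishes, while each $\tau$-independent summand survives unchanged. Collecting the survivors reproduces exactly \eqref{Equation: useful limits for NESS TT} and \eqref{Equation: useful limits for NESS T barT}. The only genuine subtlety, and the point I expect to require the most care, is the bookkeeping of the initial segment $\tau\in[0,\tau_0]$ with $\tau_0:=\max(0,-t_1,-t_2)$, where $T_k$ need not take its harmonic form: here the integrand is bounded and continuous over a \emph{fixed} finite interval, so its contribution is suppressed by the overall factor $1/t$ and drops out of the limit. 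Everything else is the elementary observation that ergodic means diagonalize in the frequency basis, retaining only the resonant zero-frequency components.
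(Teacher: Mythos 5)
Your proposal is correct and follows essentially the same route as the paper's proof: you write $T_k$ for non-negative times as the superposition $\tfrac{1}{\sqrt{2\epsilon_\lambda}}(A_+e^{-i\epsilon_\lambda t}+A_-e^{+i\epsilon_\lambda t})$ (the paper's equation for the modes at large times), expand the two products so that the $\tau$-independent terms survive the Ces\`aro mean while the terms carrying $e^{\mp 2i\epsilon_\lambda\tau}$ vanish, and discard the fixed initial segment by boundedness and the overall $1/t$ factor (the paper does the same via its $\tau_{\textrm{min}}$ and the uniform bound $|T_k|\leq(2\epsilon_\lambda)^{-1}$ quoted from the literature). The only cosmetic difference is that you justify local boundedness by continuity on a fixed compact interval rather than citing the uniform bound, which is equally sufficient.
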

\begin{proof}
	First, we recall from \cite[Lemma 5.1]{DaDr16}, see also \cite[Appendix D]{DrHaPi16} that the modes $T_k$ defined in \eqref{Equation: interpolating modes} are uniformly bounded, namely
	\begin{gather}\label{Equation: uniform bound on interpolating modes}
	|T_k(t)|\leq
	\frac{1}{2\epsilon_\lambda}=
	\frac{1}{2\sqrt{|k|^2+m^2+\lambda m_0^2}}\,.
	\end{gather}
	For $\tau\geq0$, $T_k(\tau)$ satisfies equation \eqref{Equation: interpolating modes}, where $\epsilon_{\mu=1}(k,\tau)=\epsilon_\lambda(k)$.
	Hence we may write $T_k(\tau)$ as
	\begin{align}\label{Equation: modes expression for large positive times}
		T_k(\tau)=
		\frac{1}{\sqrt{2\epsilon_\lambda}}
		\sum_{\pm}
		A_\pm e^{\mp i\epsilon_\lambda\tau}\,,\qquad A_\pm\in\mathbb{C}\,.
	\end{align}
	Let $t_1,t_2\in\mathbb{R}$ be arbitrary but fixed and set
	\begin{align}
		\tau_{\textrm{min}}:=\inf
		\{
		\tau\in\mathbb{R}|\;
		t_\ell+\tau\geq 0\;
		\forall\ell\in\{1,2\}
		\}\,.
	\end{align}
	For $\tau\geq\tau_{\textrm{min}}$ we may compute
	\begin{subequations}\label{Equation: asymptotic expression for translated interacting modes}
	\begin{align}
		\label{Equation: asymptotic expression for translated interacting modes TTbar}
		T_k(t_1+\tau)\overline{T_k(t_2+\tau)}&=
		\frac{1}{2\epsilon_\lambda}\sum_{\pm}
		\bigg[|A_\pm|^2 e^{\mp i\epsilon_\lambda(t_1-t_2)}\bigg]+
		2\Re\big(A_+\overline{A_-}e^{- i\epsilon_\lambda(2\tau+t_1+t_2)}\big)
		\\
		T_k(t_1+\tau)T_k(t_2+\tau)&=
		\frac{1}{2\epsilon_\lambda}
		\sum_{\pm}\bigg[
		A_\pm^2 e^{\mp i\epsilon_\lambda(2\tau+t_1+t_2)}+
		A_+A_-e^{\mp i\epsilon_\lambda(t_1-t_2)}
		\bigg]
		\,.
	\end{align}
	\end{subequations}
	The limits \eqref{Equation: useful limits for NESS} can be computed exploiting \eqref{Equation: uniform bound on interpolating modes}: considering \eqref{Equation: useful limits for NESS T barT} we have
	\begin{align*}
		\lim_{t\to+\infty}
		\int_0^t\frac{\textrm{d}\tau}{t}\,
		T_k(t_1+\tau)\overline{T_k(t_2+\tau)}=
		\lim_{t\to+\infty}
		\int_{\tau_{\textrm{min}}}^t\frac{\textrm{d}\tau}{t}\,
		T_k(t_1+\tau)\overline{T_k(t_2+\tau)}=
		\frac{1}{2\epsilon_\lambda}\sum_{\pm}
		A_\pm^2 e^{\mp i\epsilon_\lambda(t_1-t_2)}\,,
	\end{align*}
	where in the second equality we used \eqref{Equation: asymptotic expression for translated interacting modes TTbar}.
	A similar computation can be carried out for \eqref{Equation: useful limits for NESS TT}.
	The thesis follows.
\end{proof}
\begin{remark}
\textit{(i)}
The coefficients $A_\pm$ are subjected to the condition $|A_+|^2-|A_-|^2=1$ which ensures the Wronskian condition \eqref{Equation: Wronski condition} for the modes $T_k$.\\
\textit{(ii)}
The proof of Lemma \eqref{Lemma: useful limits for NESS} still holds true if one replaces the ergodic mean over $\tau\in (0,+\infty)$ with an ergodic mean over the real axis.
In this latter case equation \eqref{Equation: useful limits for NESS} acquires additional terms, proportional to the modes $\exp\big[\pm i\epsilon(t_1-t_2)\big]$.
\end{remark}
As an immediate consequence of Lemma \ref{Lemma: useful limits for NESS} we compute the ergodic limit of the state $\Omega_\beta\circ\mathsf{R}^{\textrm{cl}}_{\lambda Q\{\chi\}}$.
\begin{corollary}\label{Corollary: ergodic limit of KMS state under classical Moller operator}
	The limit $t\to+\infty$ of the sequence of states defined by $A\mapsto t^{-1}\int_0^t\textrm{d}s\,\big(\Omega\circ\mathsf{R}^{\textrm{cl}}_{\lambda Q\{\chi\}}\big)[\alpha_s(A)]$ exists and defines a quasi-free state $\Omega_{\textrm{\textsc{ness},cl}}$ whose two-point function is given by
	\begin{align}\label{Equation: ergodic limit of KMS state under classical Moller operator}
		\omega_{\textrm{\textsc{ness},cl}}(f,g):=
		\int_{\mathbb{R}^3}
		\frac{\textrm{d}k}{2\epsilon_\lambda}
		\sum_\pm
		c_\pm(\beta,k)
		\widehat{f}(\pm\epsilon_\lambda,k)\widehat{g}(\mp\epsilon_\lambda,-k)\,,
	\end{align}
	where 
	\begin{align}
		c_+(\beta,k):=
		\sum_\pm
		b_\pm(\beta,\epsilon)|A_\pm(k)|^2\,,\qquad
		c_-(\beta,k):=
		\sum_\pm
		b_\pm(\beta,\epsilon)|A_\mp(k)|^2\,.
	\end{align}
\end{corollary}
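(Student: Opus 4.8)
The plan is to reduce the statement to an ergodic computation of the two-point function and then invoke Lemma~\ref{Lemma: useful limits for NESS}. Since $\Omega_\beta\circ\mathsf{R}^{\textrm{cl}}_{\lambda Q\{\chi\}}$ is quasi-free with two-point function \eqref{Equation: two point function of KMS under the pull-back of classical Moller} evaluated at $\mu=1$, and since the time translation $\alpha_s$ acts on a linear generator by translating its smearing function, $\alpha_s(F_f)=F_{f_s}$ with $f_s(x):=f(x-se^0)$, it is enough to compute the ergodic limit on pairs of linear functionals $A=F_f\star_\beta F_g$; its value on higher observables will then be dictated by Wick's theorem, once the quasi-free property of the limit is established. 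First I would observe that translating $f,g$ in time by $s$ amounts, in the representation \eqref{Equation: two point function of KMS under the pull-back of classical Moller}, to shifting the arguments of the modes, $T_k(t)\mapsto T_k(t+s)$ and $T_k(t')\mapsto T_k(t'+s)$, so that $(\Omega_\beta\circ\mathsf{R}^{\textrm{cl}}_{\lambda Q\{\chi\}})[\alpha_s(F_f\star_\beta F_g)]$ is the integral \eqref{Equation: two point function of KMS under the pull-back of classical Moller} with the mode bilinears shifted accordingly.

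Next I would interchange the ergodic mean $t^{-1}\int_0^t\textrm{d}s$ with the integrations over $t,t'\in\mathbb{R}$ and $k\in\mathbb{R}^3$. This is justified by dominated convergence: the uniform bound \eqref{Equation: uniform bound on interpolating modes} on the modes, the boundedness of the thermal factors $b_\pm(\beta,\epsilon)$ for $m>0$, and the rapid decay in $k$ of the Fourier data $\widetilde f,\widetilde g$ (for compactly supported $f,g$) together furnish an $s$-independent integrable majorant. With the limit brought inside, Lemma~\ref{Lemma: useful limits for NESS} evaluates the ergodic mean of the mode bilinears at fixed $k$: by \eqref{Equation: useful limits for NESS T barT} the average of $T_k(t+s)\overline{T_k(t'+s)}$ is $(2\epsilon_\lambda)^{-1}\sum_\pm|A_\pm|^2e^{\mp i\epsilon_\lambda(t-t')}$, whereas its complex conjugate $\overline{T_k(t+s)}T_k(t'+s)$ averages to the same expression with $|A_\pm|^2$ replaced by $|A_\mp|^2$. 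Collecting the coefficient of each phase $e^{\mp i\epsilon_\lambda(t-t')}$ yields $b_+(\beta,\epsilon)|A_\pm|^2+b_-(\beta,\epsilon)|A_\mp|^2$, i.e. precisely $c_\pm(\beta,k)$, while the phases recombine with $\widetilde f,\widetilde g$ into $\widehat f(\pm\epsilon_\lambda,k)\widehat g(\mp\epsilon_\lambda,-k)$; this reproduces \eqref{Equation: ergodic limit of KMS state under classical Moller operator}.

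The hard part will be to verify that the ergodic limit genuinely defines a quasi-free state, namely that on $2n$-point observables the Cesàro mean reproduces the Wick expansion of the two-point function just computed, with no surviving cross-terms; indeed, the ergodic mean of a product of translated two-point functions need not factorize into the product of the individual means. To dispose of this I would use the asymptotic form \eqref{Equation: modes expression for large positive times} of $T_k$ at large positive times, which splits each translated bilinear into a time-stationary part and an oscillating part carrying the phases $e^{\pm 2i\epsilon_\lambda(k)s}$, as displayed in \eqref{Equation: asymptotic expression for translated interacting modes TTbar}. After the $k$-integration the oscillating contribution takes the form $\int_{\mathbb{R}^3}\textrm{d}k\,\varrho(k)\,e^{\pm 2i\epsilon_\lambda(k)s}$, which vanishes as $s\to+\infty$ by the Riemann-Lebesgue lemma, using that $\nabla_k\epsilon_\lambda$ does not vanish for $m^2+\lambda m_0^2>0$. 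Hence each momentum-integrated two-point function already converges as $s\to+\infty$ to its stationary value -- which coincides with the Cesàro mean computed above -- products of such factors converge to the product of the limits, and the Cesàro mean of a convergent sequence equals that limit. This delivers the factorized Wick structure and confirms that $\Omega_{\textrm{\textsc{ness},cl}}$ is the quasi-free state with two-point function \eqref{Equation: ergodic limit of KMS state under classical Moller operator}.
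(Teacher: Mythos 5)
Your proposal is correct, and its core is exactly the paper's (essentially unwritten) argument: the paper presents the corollary as an immediate consequence of Lemma \ref{Lemma: useful limits for NESS} applied to the mode bilinears in the representation \eqref{Equation: two point function of KMS under the pull-back of classical Moller} at $\mu=1$, which is precisely your computation, and your bookkeeping of phases correctly reproduces the coefficients $c_\pm(\beta,k)$. What you do beyond the paper is supply the two points it leaves implicit: the dominated-convergence interchange of the ergodic mean with the $t,t',k$ integrations (legitimate, via the uniform bound \eqref{Equation: uniform bound on interpolating modes}, the boundedness of $b_\pm$ for $m>0$, and the rapid decay of $\widetilde f,\widetilde g$), and the verification that the limit state is actually quasi-free. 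For the latter, your strategy --- show that the smeared, momentum-integrated two-point function converges as $s\to+\infty$ (not merely in Ces\`aro mean) because the cross-terms carrying $e^{\mp 2i\epsilon_\lambda(k)s}$ die out, then factorize the Wick sums and use that Ces\`aro means of convergent sequences equal their limits --- is sound and is the right way to make the quasi-free claim rigorous. One small flaw in the justification: you invoke Riemann--Lebesgue ``using that $\nabla_k\epsilon_\lambda$ does not vanish,'' but $\nabla_k\epsilon_\lambda=k/\epsilon_\lambda$ vanishes at $k=0$ for any mass. The conclusion nevertheless holds: passing to the energy variable $E=\epsilon_\lambda(k)$ turns the oscillating contribution into a one-dimensional Fourier transform $\int \textrm{d}E\,\varrho(E)\,e^{\pm 2iEs}$ of an integrable density (the Jacobian is proportional to $E\sqrt{E^2-m^2-\lambda m_0^2}$, integrable at the threshold), so Riemann--Lebesgue applies; alternatively, the stationary point at $k=0$ is non-degenerate and stationary phase gives decay. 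So this is a repairable mis-statement, not a gap in the proof.
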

\begin{remark}
	Notice that the CCR relations for the state $\omega_{\textrm{\textsc{ness},cl}}$ are a direct consequence of the relations $b_+-b_-=1=|A_+|^2-|A_-|^2$.
\end{remark}

We now follows the steps of the proof of Theorem \ref{Theorem: main theorem} in the case of $\Omega_{\textrm{\textsc{ness}}}$.
In particular, let $F,G\in\mathcal{A}$ be linear functionals, namely $F(\phi):=\int_Mf\phi$ and $G(\phi):=\int_Mg\phi$ for $f,g\in C^\infty_{\textrm{c}}(M)$.
We shall compute
\begin{align}\label{Equation: two-point function of the ergodic limit of the interacting KMS state}
	\Omega_{\textrm{\textsc{ness}}}
	\big(
	\mathsf{R}_{\lambda Q\{\chi\}}(F)\star_\beta
	\mathsf{R}_{\lambda Q\{\chi\}}(G)
	\big):=
	\lim_{t\to+\infty}\frac{1}{t}
	\int_0^t\textrm{d}s\;
	\Omega_{\beta,\lambda Q\{\chi\}}
	\big[
	\tau_s\big(\mathsf{R}_{\lambda Q\{\chi\}}(F)\big)\star_\beta
	\tau_s\big(\mathsf{R}_{\lambda Q\{\chi\}}(G)\big)
	\big]\,.
\end{align}
The existence of the limit $t\to +\infty$ in the sense of formal power series in $\lambda$ has already been proved in \cite{DrFaPi17}.
As in \eqref{Equation: mu-dependent interacting thermal state}, we exploit the perturbative series \eqref{Equation: interacting thermal state} in order to compute the integrand
\begin{align}
	\nonumber
	\Omega_{\beta,\lambda Q\{\chi\}}
	\big[
	\tau_s\big(\mathsf{R}_{\lambda Q\{\chi\}}(F)\big)\star
	\tau_s\big(\mathsf{R}_{\lambda Q\{\chi\}}(G)\big)
	\big]&=
	\omega_{\lambda Q\{\chi\}}(f_s,g_s)\\
	\label{Equation: tau-translated interacting thermal state}
	&+
	\sum_{n\geq 1}
	(-1)^n\int_{\beta S_n}
	\big[
	\Omega_\beta\circ\mathsf{R}^{\textrm{cl}}_{\lambda Q\{\chi\}}
	\big]^c
	\bigg[
	F_s G_s
	\otimes\bigotimes_{\ell=1}^nQ\{\dot{\chi}\}_{iu_\ell}
	\bigg]\,.
\end{align}
Thanks to Corollary \ref{Corollary: ergodic limit of KMS state under classical Moller operator} one finds that, once integrated in $s$, the first term in the right-hand side of \eqref{Equation: tau-translated interacting thermal state} converges to $\omega_{\textrm{\textsc{ness},cl}}(f,g)$.
Similarly to the proof of Theorem \ref{Theorem: main theorem} we focus on the expansion of the $n$-th term
\begin{align}\label{Equation: connected n-point function contribution for ergodic mean}
	\big[
	\Omega_\beta\circ\mathsf{R}^{\textrm{cl}}_{\lambda Q\{\chi\}}
	\big]^c
	\bigg(
	F_s G_s
	\otimes\bigotimes_{\ell=1}^nQ\{\dot{\chi}\}_{iu_\ell}
	\bigg)\,.
\end{align}
Once again, this contribution is the sum over $n$-permutations, in particular
\begin{align}
	\label{Equation: graph expansion of n-order contribution for ergodic mean}
	\eqref{Equation: connected n-point function contribution for ergodic mean}&=
	\sum_{\sigma\in\wp_n}
	(\lambda m_0^2)^n
	\int_{\mathbb{R}^{4(n+2)}}
	\textrm{d}Z
	f(z_0)g(z_{n+1})
	\omega_{\lambda Q\{\chi\}}(z_0+s e^0,z_{\sigma_1})
	\omega_{\lambda Q\{\chi\}}(z_{n+1}+s e^0,z_{\sigma_n})\Xi_\sigma(\hat{Z},U)\,,\\&
	\Xi_\sigma(\hat{Z},U):=
	\prod_{j=1}^{n-1}\omega_{\lambda Q\{\chi\}}(z_{\sigma_j},z_{\sigma_{j+1}})
	\prod_{\ell=1}^n
	\dot{\chi}_{iu_\ell}(z_\ell^0)\,,\qquad\hat{Z}:=(z_1,\ldots,z_n)\,.
\end{align}
Exploiting equation \eqref{Equation: two point function of KMS under the pull-back of classical Moller} (for $\mu=1$) one reduces the previous expression \eqref{Equation: graph expansion of n-order contribution for ergodic mean} to an integral in three momentum $k\in\mathbb{R}^3$.
Considering the ergodic mean of the contribution to \eqref{Equation: graph expansion of n-order contribution for ergodic mean} of a single $n$-permutation $\sigma\in\wp_n$ and applying Lemma \ref{Lemma: useful limits for NESS}, one finds
\begin{align}
	\begin{array}{l}
	\textrm{ergodic mean of}\\
	\textrm{the $\sigma$-contribution to } \eqref{Equation: graph expansion of n-order contribution for ergodic mean}
	\end{array}\to_{t\to+\infty}
	(\lambda m_0^2)^n\int_{\mathbb{R}^3}\frac{\textrm{d}k}{2\epsilon_\lambda}\Bigg[
	\sum_{\pm}
	a_{\pm,\sigma}(k,U)
	\widehat{f}(\pm\epsilon_\lambda,k)
	\widehat{g}(\mp\epsilon_\lambda,-k)
	\Bigg]\,,
\end{align}
where we defined
\begin{align*}
	a_{\pm,\sigma}(k,U):=\int_{\mathbb{R}^{4n}}\textrm{d}\hat{Z}\bigg[
	b_+(\epsilon)^2A_+A_-\overline{T_k(z_{\sigma_1})T_k(z_{\sigma_n})}+
	b_-(\epsilon)^2\overline{A_+A_-}T_k(z_{\sigma_1})T_k(z_{\sigma_n})\\+
	b_+(\epsilon)b_-(\epsilon)\bigg(
	|A_\pm|^2\overline{T_k(z_{\sigma_1})}T_k(z_{\sigma_n})+
	|A_\mp|^2T_k(z_{\sigma_1})\overline{T_k(z_{\sigma_n})}
	\bigg)
	\bigg]\Xi_\sigma(\hat{Z},U)\,.
\end{align*}
Hence, in spite of the fact that the ergodic limit of each term present in \eqref{Equation: graph expansion of n-order contribution for ergodic mean} can be computed, the resulting limit appears to depend in a quite complicate way on the chosen graph $\sigma\in\wp_n$.
This fact spoils the chance to infer a closed form for the ergodic limit of the series \eqref{Equation: two-point function of the ergodic limit of the interacting KMS state}.

\paragraph{Acknowledgements.}
The author is grateful to
Claudio Dappiaggi,
Federico Faldino,
Klaus Fredenhagen,
Thomas-Paul Hack and
Nicola Pinamonti,
for enlightening discussions and comments on a preliminary version of this paper.
This work was supported by the National Group of Mathematical Physics (GNFM-INdAM).

\end{document}